\newlist{enuminline}{enumerate*}{1}
\setlist[enuminline]{label=(\arabic*)}
\setlist[itemize]{noitemsep, topsep=0pt}
\setlist[enumerate]{noitemsep, topsep=0pt}
\newcommand{\nats}{\mathbb{N}}
\DeclarePairedDelimiter\floor{\lfloor}{\rfloor}
\newcommand{\pfun}{\rightharpoonup}
\newcommand{\card}[1]{\left|#1\right|}
\newcommand{\len}[1]{\left|#1\right|}
\newcommand{\set}[1]{\left\{#1\right\}}
\newcommand{\graph}[1]{\Gamma(#1)}
\newcommand{\ran}{{\mathit{ran}}}
\newcommand{\truthof}[1]{{\llbracket}#1{\rrbracket}}
\newcommand{\mcinit}{\iota}
\newcommand{\norm}{\mathit{norm}}
\newcommand{\evstrong}{\diamond\mathcal{S}}
\newcommand{\evweak}{\diamond\mathcal{W}}
\newcommand{\alg}{\mathcal{A}}
\newcommand{\fds}{\mathit{FD}}
\newcommand{\procs}{\Pi}
\newcommand{\fdran}{\mathcal{R}}
\newcommand{\sspace}{\Sigma}
\newcommand{\sspacep}{{\Sigma_p}}
\newcommand{\sspaceq}{{\Sigma_q}}
\newcommand{\msgs}{\mathcal{M}}
\newcommand{\vals}{\mathcal{V}}
\newcommand{\hbeat}{\mathit{hb}}
\newcommand{\css}{\mathit{CSS}}
\newcommand{\nnext}{\mathit{next}}
\newcommand{\send}{\mathit{send}}
\newcommand{\step}{\mathit{step}}
\newcommand{\rmsg}{\mathit{rmsg}}
\newcommand{\rmsgs}{\mathit{rmsgs}}
\newcommand{\smsgs}{\mathit{smsgs}}
\newcommand{\fdo}{\mathit{fdo}}
\newcommand{\fails}{\mathit{fails}}
\newcommand{\str}{\mathit{tr}}
\newcommand{\crs}{\mathit{CRS}}
\newcommand{\cralg}{\alg^R}
\newcommand{\crsys}{\mathcal{S}^R}
\newcommand{\crcfgs}{\mathcal{C}^R}
\newcommand{\crtr}{\mathit{Tr}}
\newcommand{\hd}{\mathit{head}}
\newcommand{\tl}{\mathit{tail}}
\newcommand{\last}{\mathit{last}}
\newcommand{\inv}{\mathit{Inv}}
\newcommand{\unfold}{\mathit{unfold}}
\newcommand{\frag}{\mathit{fr}}
\newcommand{\sper}{\mathit{stable}}
\newcommand{\pnet}{\mathcal{N}}
\newcommand{\pfail}{{F^\mathcal{P}}}
\newcommand{\epsnet}{\epsilon_{\mathcal{N}}}
\newcommand{\epsfail}{\epsilon_{F}}
\newcommand{\epsstable}{\epsilon_{\mathit{ss}}}
\newcommand{\buff}{\textit{buff}}
\newcommand{\acks}{\textit{acks}}
\newcommand{\inp}{\mathit{inp}}
\newcommand{\dec}{\mathit{dec}}
\makeatletter\AtEndDocument{\immediate\write\@auxout{\string\ulp@afterend}}\makeatother
\newcommand{\todo}[1]{\textit{\textcolor{red}{[TODO]: #1}}} 
\author{David Kozhaya}{ABB Corporate Research, Switzerland}{david.kozhaya@ch.abb.com}{}{}
\author{Ognjen Maric}{\text{ }}{ogi.yolmt@mynosefroze.com}{}{}
\author{Yvonne-Anne Pignolet}{ABB Corporate Research, Switzerland}{yvonne-anne.pignolet@ch.abb.com}{}{}
\authorrunning{D. Kozhaya, O. Maric, Y.A. Pignolet}
\subjclass{\ccsdesc[500]{Theory of computation~Distributed algorithms}}\vspace{-8pt}
\keywords{Crash recovery, consensus, asynchrony}
\title{You Only Live Multiple Times:  A Blackbox Solution for Reusing Crash-Stop Algorithms In Realistic Crash-Recovery Settings}
\titlerunning{You Only Live Multiple Times}
\begin{document}
\maketitle

\vspace{-8pt}\begin{abstract}
Distributed agreement-based algorithms are often specified in a crash-stop 
asynchronous model augmented by Chandra and Toueg's unreliable failure 
detectors. In such models, correct nodes stay up forever, incorrect nodes 
eventually crash and remain down forever, and failure detectors behave correctly
forever eventually, However, in reality, nodes as well as communication links 
both crash and recover without deterministic guarantees to remain in some state 
forever.

In this paper, we capture this realistic temporary and probabilitic behaviour in
a simple new system model. Moreover, we identify a large algorithm class for 
which we devis a property-preserving transformation. Using this transformation, 
many algorithms written for the asynchronous crash-stop model run correctly and 
unchanged in real systems.

 \end{abstract}

\section{Introduction}
\label{sec:intro}
 
Distributed systems comprise multiple software and hardware components that are bound to eventually fail~\cite{Cristian:UFD}. Such failures can cause service malfunction or unavailability, incurring significant costs to
mission-critical systems, e.g., automation systems and on-line transactions. 
The failures' impact can be minimized by protocols that let systems
agree on values and actions despite failures. As a consequence, many variants of the agreement or the
consensus problem~\cite{pease1980reaching} under different assumptions
have been studied. 
Of particular importance are synchrony and failure model assumptions,
as they determine the problem's complexity.

In the simplest failure model, often called the \emph{crash-stop}
model, a process fails by stopping to execute its protocol and never
recovers.  In an
asynchronous system, i.e., a system without bounds on execution
delays or message latency, assuming the \emph{crash-stop} failure model,
makes it impossible to
distinguish a crashed from a very slow process. This renders consensus-like problems unsolvable
deterministically~\cite{fischer1985impossibility}, already in this
very simple failure model. To circumvent this impossibility,
previous works have investigated ways to relax the underlying
asynchrony assumption either explicitly, e.g., by using partial
synchrony~\cite{DworkCPP}, or implicitly, by defining oracles that
encapsulate time, e.g., failure detectors~\cite{Chandra:UFD}. The
result is a large and rich body of literature that builds on top of
the former and latter techniques to solve consensus-like problems in
the presence of crash-stop~failures. Typically, the respective proofs
rely on assumptions of the ``eventually forever'' form: the correct
nodes staying up forever, incorrect nodes eventually crashing and
remaining down forever, and failure detectors producing wrong output
in the beginning, but eventually providing correct results forever.

However, such "eventually forever" assumptions are not met by real distributed systems. In reality,
processes may crash but their processors reboot, and the recovered
process rejoins the computation. Communication might also fail at any point in time, but get restored later. Hence, the failure and recovery modes of processes as well as
communication links are in reality probabilistic and
temporary~\cite{NSN,Fetzer,schmid2009impossibility},
especially in systems incorporating many unreliable
off-the-shelf low-cost devices and communication technologies. This led to the development of \emph{crash-recovery} models, where processes repeatedly leave and join the computation
unannounced. This requires new failure detector definitions and
new consensus algorithms built on top of these failure
detectors~\cite{aguilera2000failure,LARREA:CR,Dolev:1997,M.Hurfi:1998} as well 
as completely new solutions (without failure detectors) that consider different 
classes of failures, namely classified
according to how many times a process can crash and
recover~\cite{Mittal:2009}. However, such solutions eliminate the "eventually 
forever" assumptions only on the processes' level and not for the communication 
and failure detectors. Moreover, they do not tell us whether crash-stop 
algorithms can be ``ported unchanged'' to crash-recovery~settings.

To this end, this paper investigates how to re-use consensus algorithms defined for
the crash-stop model with reliable links and failure detectors in a
more realistic crash-recovery model, where processes and links can crash and recover probabilistically and for an unbounded number of times. Our models allow 
\emph{unstable} nodes, i.e., nodes that fail and recover infinitely
often. These are often excluded or limited in number in other models. In 
contrast, we explicitly allow unstable behavior of any number
of processes and links, by modeling
communication problems and crash-recovery behaviors as probabilistic
and temporary, rather than deterministic and perpetual. Our system model, 
similar to existing models that rely on probabilistic factors, e.g., coin 
flips, comes with the trade-off of solving consensus (namely the 
termination property), with probability 1, rather than deterministically. 

However, unlike existing solutions that incorporate probabilistic behavior, our approach does not aim at inventing new consensus algorithms but rather focuses on using existing deterministic ones to solve consensus with probability 1. Our approach is modular: we build a wrapper
that interacts with a crash-stop algorithm as a black box, exchanges
messages with other wrappers and transforms these messages into
messages that the crash-stop algorithm understands. We then formally
define classes of algorithms and safety properties for which we prove
that our wrapper constructs a system that preserves these
properties. Additionally, we show that termination with probability 1 is
guaranteed
for wrapped algorithms of this class. Moreover, this class is wide and includes
the celebrated Chandra-Toueg algorithm~\cite{Chandra:UFD} as well as the
instantiation of the indulgent framework with failure detectors
from~\cite{GUERRAOUI:2003} .
Our work allows such algorithms to be ported unchanged
to our crash-recovery model. Hence applications built on top of such algorithms 
can run in real systems with crash-recovery behavior by simply using our wrapper.

\noindent\textbf{Contributions:}
To summarize, our main contributions are:
\begin{itemize}
	\item New system models that capture probabilistic and temporary failures and recoveries of processes and communication links in real distributed systems (described in Section~\ref{sec:system-models}) 
	\item A wrapper framework that allows a wide class of crash-stop consensus algorithms to be used unchanged in our more realistic models (described in
    Section~\ref{sec:wrappers-crash-stop})
    \item Formal properties describing which crash-stop consensus algorithms benefit from our framework and hence can be reused to solve consensus in crash-recovery settings (described Sections~\ref{sec:preservation-results} and~\ref{sec:ct-prob-term})
\end{itemize}

\noindent In addition to the sections presenting our contributions, we discuss related work in Section~\ref{sec: related work} and conclude the paper in Section~\ref{sec: conclusion}. %
\ifthenelse{\boolean{short}}{
	Due to space limitations, we
defer most proofs and some formalization details to a companion
technical report~\cite{TR}.}{}


\section{Related Work}~\label{sec: related work} 
Several works
addressed the impossibility of asynchronous consensus. One direction
exploits the concept of partial synchrony~\cite{DworkCPP}, in which an
asynchronous system becomes synchronous after some unknown global
stabilization time (GST) for a bounded number of rounds. 
For the same model, 
ASAP~\cite{alistarh2008solve} is a consensus algorithm where every
process decides no later than round $GST + f + 2$ (optimal). Another
direction augments asynchronous systems with failure detector oracles,
and builds asynchronous consensus algorithms on
top~\cite{Chandra:UFD}. These detectors typically behave erratically
at first, but eventually start behaving correctly forever.
Like with partial synchrony, the failure detectors must behave correctly for
only "sufficiently long" instead of forever~\cite{Chandra:UFD}; however, quantifying "sufficiently long" is not expressible in a purely asynchronous model~\cite{rethinkFD}.
Both lines of work initially investigated
crash-stop failures of processes. In real systems processes as well as
network links crash and recover multiple times and sometimes even
indefinitely. This gave rise to a large body of literature that
studied how to adapt the two lines of work to crash-recovery behavior
of processes and links.  We next survey some of this literature.

\textbf{Failure detectors and consensus algorithms for crash recovery:}
%
Dolev et al.~\cite{Dolev:1997} consider an asynchronous environment where communication links first lose messages arbitrarily, but eventually communication stabilizes such
that a majority  of processes forms a forever strongly connected component. Processes belonging to such a strongly-connected component are termed correct, and the others faulty. Process state is always (fully) persisted in stable storage. The authors propose a failure detector that allows the correct processes to reach a consensus decision and show that the rotating coordinator algorithm~\cite{Chandra:UFD} works unchanged in their setting, as long as all messages are constantly retransmitted. This relies on piggybacking all previous messages onto the last message, and regularly retransmitting the last message. As this yields very large messages, they also propose a modification of~\cite{Chandra:UFD} for which no piggybacking is necessary. While our results also rely on strongly connected components, we do not require their existence to be deterministic nor perpetual. We also do not require piggybacking in order for algorithms like~\cite{Chandra:UFD} to be used unchanged.

Oliveira et al.~\cite{Oliveira:50073} consider a crash-recovery setting with correct processes
that may crash only finitely many times (and thus eventually stay up
forever) and faulty processes that permanently crash or crash
infinitely often. As in~\cite{Chandra:UFD}, the authors note that correct processes only need to stay up for long enough periods in practice (rather than forever), but this cannot be expressed in the asynchronous model. The authors take the consensus algorithm of~\cite{Schiper:1997} which uses stubborn links and transform it to work in the
crash-recovery setting by logging every step into stable storage and adding a fast-forward mechanism for skipping rounds. Hurfin et al.~\cite{Hurfin:995450} describe an algorithm using the $\evstrong$ detector in the crash-recovery case. The notions of correct/faulty processes and of failure detectors are the same as in Oliviera et al~\cite{Oliveira:50073}. Their algorithm is however more efficient when using stable storage compared to~\cite{Oliveira:50073}: there is only one write per round (of multiple data), and the required network buffer capacity for each channel (connecting a pair of processes) is one. Compared to~\cite{Oliveira:50073} and~\cite{Hurfin:995450} our system does not regard processes that crash and recover infinitely often as faulty and hence we allow such ``unstable'' processes to implement~consensus. 

Aguilera et al.~\cite{aguilera2000failure} consider a crash-recovery
system with lossy links. They show that previously proposed failure detectors 
for the crash-recovery setting have anomalous behaviors even in synchronous 
systems when considering unstable processes, i.e., processes that crash and 
recover infinitely often. The authors propose new failure detectors to mitigate 
this drawback. They also determine the necessary conditions regarding stable 
storage that allow consensus to be solved in the crash-recovery model, and 
provide two efficient consensus algorithms: one with, and one without using 
stable storage. Unlike~\cite{aguilera2000failure}, we do not exclude unstable 
processes from implementing consensus, thus our model tolerates a wider variety 
of node behavior. Furthermore, our wrapper requires no modifications to the 
existing crash-stop consensus algorithms, as it treats them as black-boxes.

\textbf{Modular Crash-Recovery Approaches:} 
Similar to~\cite{aguilera2000failure}, Freiling et al.~\cite{Freiling}
investigate the solvability of consensus in the crash-recovery model
under varying assumptions, regarding the number of unstable and correct
processes and what is persisted in stable storage. They reuse
existing algorithms from the crash-stop model in a modular way
(without changing them) or semi-modular way, with some modifications
to the algorithm (as in the case of~\cite{Chandra:UFD}). Similar
to our work, they provide algorithms to emulate a crash-stop system on top of a
crash-recovery system.  Our work, however, always reuses algorithms in a
fully modular way, and we define a wide class of algorithms for which such
reuse is possible. Furthermore, as we model message losses,
processes crashes, and process recoveries probabilistically, 
our results also apply if processes are unstable, i.e., crash and
recover infinitely often.

\textbf{Randomized Consensus Algorithms:} Besides the literature that studied deterministic consensus algorithms, existing works have also explored randomized algorithms to solve ``consensus with probability~1''. These include, for example, techniques based on using random \textit{coin-flips}~\cite{DAN:2014,Aspnes:2010,Rand3} or \textit{probabilistic schedulers}~\cite{Bracha:Toueg}. In systems with dynamic communication failures, multiple randomized algorithms~\cite{Randomizationhealer,Turquois} addressed the $k$-\textit{consensus} problem, which requires only $k$ processes to eventually decide. Moniz et al.~\cite{Randomizationhealer} considered a system with correct processes and a bound on the number of faulty transmission. In a wireless setting, where multiple processes share a communication channel, Moniz et al.~\cite{Turquois} devise an algorithm tolerating up to $f$ Byzantine processes and requires a bound on the number of omission faults affecting correct processes. In comparison, our work in this paper does not use randomization in the algorithm itself: we focus on using existing deterministic algorithms to solve consensus (with probability 1) in networks with probabilistic failure and recovery patterns of processes and links. 

\section{System Models}
\label{sec:system-models}

We start with defining the notation we use, and then define general
concepts common to all of our models. Then, we define each of our
models in turn.

\noindent\textbf{Notation:}
Given a set $S$, we define $S_\bot$ to be the set $S \cup \set{\bot}$, where $\bot$ is a distinguished element not present in $S$.
The set of finite sequences over a set $S$ is denoted by
$S^*$. We also call
sequences \emph{words}, when customary. The
empty sequence is denoted by $[]$. Given a non-empty sequence, $\hd$
defines its first element, $\tl$ the remainder of
the sequence, and, if the sequence is finite, $\last$ its last element. Given two sequences $u$ and $v$, where
$u$ is finite, $u \cdot v$ denotes
their concatenation. For a word $u$, $\len{u}$ denotes the length of
$u$. Letting $u(i)$ be the $i$-th letter of $u$, we say that $u$ is a
\emph{subword} of $v$ if there exists a strictly monotone function
$f : \nats \rightarrow \nats$ such that $u(i) = v(f(i))$, for
$1 \le i \le \len{u}$ if $u$ is finite, and for all
$i \in \set{1, 2, \ldots}$ if $u$ is infinite. Analogously, $v$ is a \emph{superword} of $u$.

We denote the space of partial functions between sets $A$ and $B$ by
$A \pfun B$. Note that $(A \rightarrow B) \subseteq (A \pfun B)$. Given any function \mbox{$f : A \pfun B$}, its
\emph{graph}, written $\graph{f}$, is the relation
$\set{(x, y) \mid f(x) = y}$.
The \emph{range} of $f$, written $\ran(f)$, is the set $\set{y \mid (x, y) \in \graph{f}}$. 

\noindent\textbf{Common concepts:}
We consider a fixed finite set of processes
$\procs = \set{1 \ldots N}$, and a fixed countable set of
values, denoted $\vals$.
For each algorithm there is an algorithm-specific countable set of
local states $\sspacep$, for each process $p \in \procs$. For
simplicity, we restrict ourselves to algorithms where $\sspacep = \sspaceq, ~\forall p,\, q \in \procs$.
Note that this does not exclude algorithms that take decisions based on 
identifiers. We define the global state space $\sspace = \prod_{p \in \procs} \sspacep$. Given a $s \in \sspace$, we
define $s_p \in \sspacep$ as the projection of $s$ to its $p$-th
component. 


A \emph{property} over an alphabet $A$ is a set of infinite words over
$A$.  We use standard definitions of liveness and safety
properties~\cite{alpern_defining_1985}.
	A property $P$ is a \emph{safety property} if, for
  every infinite word $w \notin P$, there exists a finite prefix $u$
  of $w$ such that the concatenation $u \cdot v \notin P$ for all infinite words $v$.
  Intuitively, the prefix $u$ is ``bad'' and not recoverable from.
	A property $P$ is a \emph{liveness property} if
  for any finite word $u$ there exists an infinite word $v$ such that
  $u \cdot v \in P$. Intuitively, ``good'' things can always happen
  later.

In this paper, we are interested in preserving properties over the
alphabet $\sspace$ between the crash-stop and crash-recovery versions
of an algorithm. In particular, we assume that the local states
$\sspacep$ are records, with two distinguished fields: $\inp$ of type
$\vals$ and $\dec$ of type $\vals_\bot$. Intuitively, a $\dec$
value of $\bot$ indicates that the process has not decided yet. For an
infinite word $w$ over the alphabet $\sspace$, let $w(i, p)$ denote
the local state of the process $p$ at the $i$-th letter of the word.
Let us state the standard safety properties of consensus in our notation.

\begin{description}
\item[Validity.] Decided values must come from the set of input
  values. Formally, validity describes the set of
  words $w$ such that
  $
    \forall p,\ i,\ v.\ w(i, p).\dec = v \land v \neq \bot \Longrightarrow 
    \exists q.\ w(1, q).\inp = v
  $
\item[Integrity.] Processes do not change their decisions. Formally, integrity describes the set of words $w$ such that
  $
    \forall p,\ i.\ v.\ w(i, p).\dec = v \land v \neq \bot \Longrightarrow 
     \forall i' > i.\ w(i', p).\dec = v
  $
\item[(Uniform) Agreement.] No two processes ever make different non-$\bot$
  decisions. Formally,
  $
    \forall p,\ q,\ i,\ j.\ w(i, p).\dec \neq \bot \neq w(j, q).\dec \Longrightarrow 
    w(i, p).\dec = w(j, q).\dec
  $

\end{description}

To simplify our preservation results for safety properties,
our models store information about process failures separately
from $\sspace$. 
As a consequence, the standard crash-stop termination property cannot be
expressed as a property over $\sspace$: it is conditioned on a
process not failing. 
However, we do not directly use the crash-stop notion
of termination and we omit this definition here. 
Instead, we will prove the following property for the algorithms in our probabilistic crash-recovery model:

\begin{description}
\item[Probabilistic crash-recovery termination.] With probability $1$,
  all processes eventually decide. 
\end{description}

\subsection{The crash-stop model}
\label{sec:crash-stop-model}

Our definition of the crash-stop model is standard and closely
follows~\cite{Chandra:UFD}. We assume an asynchronous environment,
with processes taking steps in an interleaved fashion. Processes
communicate using reliable links, and can query failure detectors.

\noindent\textbf{Failure detectors:}
A \emph{failure pattern} $\mathit{fp}$ is an infinite word over the alphabet
$2^\procs$. Intuitively, 
each letter is the set of failed processes in a 
transition step of a run of a transition
system. A \emph{failure detector} with range $\fdran$ is a function from failure
patterns to properties over the alphabet $\fdran$.%
\footnote{This definition does not distinguish which process received the output, which is sufficient for $\evstrong$. The definition can be easily extended to other failure detectors like $\evweak$.} 
A failure detector $D$ is 
\emph{unreliable} if $D(\mathit{fp})$ is a liveness property for all $\mathit{fp}$. 
Intuitively, a detector constrains how
the failure detector outputs (the $\fdran$ values) must depend on the
failure pattern of a run, and
unreliable detectors can produce arbitrary outputs in the beginning. 
 We write $\fds(\fdran)$
for the set of all detectors with range $\fdran$.

\noindent\textbf{Algorithms and algorithm steps:}
\label{sec:algorithm-steps}
The type of \emph{crash-stop steps} over a message space $\msgs$ and
a failure detector range $\fdran$, written $\css(\msgs, \fdran)$ is defined
as a pair of functions of types:

\ifthenelse{\boolean{short}}{
$\nnext: \sspacep \times (\procs \times \msgs)_\bot \times \fdran \rightarrow
  \sspacep $, ~~ $\send:  \sspacep \rightarrow (\procs \pfun \msgs). $
}
{
\begin{itemize}
  \item $\nnext: \sspacep \times (\procs \times \msgs)_\bot \times \fdran \rightarrow
  \sspacep $
  \item $\send:  \sspacep \rightarrow (\procs \pfun \msgs). $
\end{itemize}
}

\noindent
Intuitively, given zero or one messages received from some other
processes and an output of the failure detector, a step maps the current
process state to a new state, and maps the new state to a set
of messages to be sent, with zero or one messages sent to each
process.

\noindent A \emph{crash-stop
  algorithm} $\alg$ over $\sspace$, $\msgs$ and $\fdran$ is a tuple
$(I, \step, D, N_f)$ where:
\ifthenelse{\boolean{short}}{
$I \subseteq \sspace$ is the \emph{finite} set of initial states, $\step \in \css(\msgs, \fdran)$ is the step function,
  and 
$D \in FD(\fdran)$ is a failure detector.
 $N_f < N$ is the resilience condition, i.e., the number of
  failures tolerated by the algorithm (recall that we consider a fixed
  $N$).
}
{
\begin{itemize}
\item $I \subseteq \sspace$ is the \emph{finite} set of initial states,
\item $\step \in \css(\msgs, \fdran)$ is the step function,
  and 
\item $D \in FD(\fdran)$ is a failure detector.
\item $N_f < N$ is the resilience condition, i.e., the number of
  failures tolerated by the algorithm (recall that we consider a fixed
  $N$).
\end{itemize}
}
We refer to the components of an algorithm $\alg$ by $\alg.I$, $\alg.\step$, $\alg.D$ and $\alg.N_f$.
\ifthenelse{\boolean{short}}{}
{
As $I$ is finite, it will admit a uniform distribution in our
probabilistic model in Section~\ref{sec:probabilistic-cr}.
}

\noindent\textbf{Configurations:}
\label{sec:configurations}
As noted earlier, we focus on preserving properties over $\sspace$
between crash-stop and crash-recovery models.
However, $\sspace$ contains insufficient information to model the
algorithm's crash-stop executions (runs). In particular, to account for
\begin{enuminline}
\item asynchronous message delivery and
\item process failures,
\end{enuminline}
we must extend states to \emph{configurations}. A \emph{crash-stop configuration}
is a triple $(s, M, F)$ where:
	\ifthenelse{\boolean{short}}{
	$s \in \sspace$ is the (global) state,
$M \subseteq \procs \times \procs \times \msgs$ is the set of
  \emph{in-flight messages}, where $(p, q, m) \in M$ represents a
  message $m$ that was sent to $p$ by $q$, and  $F \subseteq \procs$ is the set 
	of failed processes.
	}{
\begin{itemize}
\item $s \in \sspace$ is the (global) state,
\item $M \subseteq \procs \times \procs \times \msgs$ is the set of
  \emph{in-flight messages}, where $(p, q, m) \in M$ represents a
  message $m$ that was sent to $p$ by $q$.
For simplicity, $M$ is
  a set, i.e., we assume that each message is sent at most once
  between any pair of processes during the entire execution of the
  algorithm. This suffices for round-based algorithms that tag messages
  with their round numbers, and exchange messages once per round.
\item $F \subseteq \procs$ is the set of failed processes.
\end{itemize}
}
\noindent As with algorithms, we refer to the components of a configuration $c$ by
$c.s$, $c.M$ and $c.F$.

\noindent\textbf{Step labels and transitions:}
\label{sec:transitions}
While the algorithm steps are deterministic, the asynchronous transition system is not: any
(non-failed) process can take a step at any point in time, with
different possible received messages, and different failure detector
outputs. Accessing this non-determinism information is useful in proofs, so we extract it as follows. 
\ifthenelse{\boolean{short}}{}
{\\}
A \emph{crash-stop step label} is a quadruple $(p, \rmsg, \fails, \fdo)$,
where:
\ifthenelse{\boolean{short}}{
$p \in \procs$ is the process taking the step,
$\rmsg \in (\procs \times \msgs)_\bot$ is the message $p$ receives in the 
step ($\bot$ modeling a missing message),
$\fails \subseteq \procs$ is the set of processes failed
  at the end of the step,
$\fdo \in \fdran$ is $p$'s output of the failure detector.
}
{\begin{itemize}
\item $p \in \procs$ is the process taking the step,
\item $\rmsg \in (\procs \times \msgs)_\bot$ is the message $p$ receives in the 
step ($\bot$ modeling a missing message),
\item $\fails \subseteq \procs$ is the set of processes failed
  at the end of the step,
\item $\fdo \in \fdran$ is $p$'s output of the failure detector.
\end{itemize}
}
A \emph{crash-stop step} of the
algorithm $\alg$ is a triple $(c, l, c')$, where $c$ and $c'$ are configurations and $l$ 
 is a label. Crash-stop steps must satisfy the following properties: 
\ifthenelse{\boolean{short}}{
(i) $p$ is not failed at the start of
  the step. 
(ii) $p$ takes a step according to the label and the algorithm's rules, and the 
other processes do not move. 
(iii)
 If a message was received, then it was in flight; the received message is
removed from the set of in-flight messages, while the produced messages are 
added.
(iv) Failed processes do not recover, i.e., $c.F \subseteq c'.F = \fails$.
}
{\begin{itemize}
\item $p \notin c.F$, i.e., $p$ is not failed at the start of
  the step.

\item With $s_p$ and $s_p'$ denoting process $p$'s state in $c$ and
  $c'$ respectively, $s_p' = \nnext(s_p, \rmsg, \fdo)$, and
  $s_q' = s_q$ for $q \neq p$. That is, $p$ takes a step according to
  the label and the algorithm's rules, and the other processes do not
  move. 

\item If $\rmsg \neq \bot$, then $(p, \rmsg) \in c.M$, i.e., if
  a message was received, then it was in flight.

\item Letting $\smsgs = \send(s_p')$, then
  $
    c'.M = (c.M\, \setminus\, \mathit{rcvd}) \cup \set{(q, p, m) \mid (q,
      m) \in \graph{\smsgs}},
  $
  where $\mathit{rcvd}$ is
  $\set{(p, \rmsg)}$ if $\rmsg \neq \bot$ and $\emptyset$ otherwise.
  That is, the received message is removed from the set of in-flight
  messages, and the produced messages are added.

\item $c.F \subseteq c'.F$. That is, failed processes do not recover
  in the crash-stop setting.

\item $\fails = c'.F$.

\end{itemize}
}

\noindent\textbf{Algorithm runs:}
\label{sec:algorithms-runs}
A finite, respectively infinite \emph{crash-stop run} of $\alg$ is a
finite, respectively infinite alternating sequence
$c_0,\, l_0,\, c_1 \ldots$ of configurations and labels, that ends
in a configuration if finite, such~that
\ifthenelse{\boolean{short}}{
the initial state is allowed by the
  algorithm, $(c_i, l_i, c_{i+1})$ are valid steps, the resilience
  condition is satisfied. Furthermore,  the output of the failure detector must 
	satisfy the condition of the failure detector.
}{
\begin{itemize}
\item $c_0.s \in \alg.I$, i.e., the initial state is allowed by the
  algorithm.
\item For each $c_i$, $l_i$ and $c_{i+1}$ in the sequence,
  $(c_i, l_i, c_{i+1})$ is a step of $\alg$.  
  
\item For each $c_i$ in the sequence, $\card{c_i.F} \le N_f$ (the resilience
  condition is satisfied).

\item There is a failure pattern $\mathit{fp}$ such that
  the sequence $c_i.F$ is a prefix of $\mathit{fp}$, and
  the sequence $\fdo_i$ is a prefix of $D(fp)$.
  That is,
  the output of the failure detector satisfies the
  condition of the failure detector.
\end{itemize}
}
Such a run has \emph{reliable links} if,
\ifthenelse{\boolean{short}}{
all in-flight messages eventually get delivered, unless the
sender or the receiver is faulty.
}{
whenever $(p, q, m) \in c_i.M$, and $p, q \notin c_j.F$ for
all $j$, then there exists a $k \ge i$ such that
$l_k = (p, (q, m), \fails, fdo)$ for some $\fails$ and $\fdo$. That
is, all in-flight messages eventually get delivered, unless the
sender or the receiver is faulty.
}
The \emph{crash-stop system} of the algorithm $\alg$ is the sequence
of all crash-stop runs of $\alg$. The \emph{crash-stop system with
  reliable links} of the algorithm $\alg$ is the set of all crash stop
runs with reliable links. 

As mentioned before, we are interested in properties that are sequences of global states. In this sense, runs
contain too much 
information (e.g., in-flight messages). Thus, given a run $\rho$, we define its \emph{state
  trace} $\str(\rho)$, obtained by removing the labels and projecting
configurations onto just the states. We introduce a notion of a \emph{state property}: an
infinite sequence of (global) states. The crash-stop system (with or without
reliable links) satisfies a state property $P$ if for every run $\rho$
of the system, $\str(\rho) \in P$. We later show that our crash-recovery
wrappers for crash-stop protocols preserve important state properties of
crash-stop algorithms. Lastly, we note down some simple properties of crash-stop runs.

\begin{lemma}[Reliable links irrelevant for prefixes]
  \label{lemma:reliable-irrelevant-prefix}
  Let $\rho$ be a finite crash-stop run of $\alg$. Then, $\rho$ can be
  extended to an infinite crash-stop run of $\alg$ that has reliable
  links, by extending $\str(\rho)$ to include infinite message retransmissions for all sent messages. 
\end{lemma}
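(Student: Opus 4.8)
The plan is to keep the finite run $\rho = c_0, l_0, \ldots, c_n$ fixed and to append an infinite tail of message-delivery steps chosen by a fair scheduler, while never failing any further process. Concretely, I would set $c_j.F = c_n.F$ for all $j \ge n$, so that the run's induced failure pattern is the finite failure prefix of $\rho$ frozen at $c_n.F$ forever. Freezing immediately discharges two run conditions: the failure sets stay monotone (the $\subseteq$ requirement of a valid step holds with equality, nothing recovers), and the resilience bound $\card{c_j.F} \le N_f$ keeps holding because it already held at $c_n$. It also pins down, once and for all, the set of processes that are non-failed at every index, namely $\procs \setminus c_n.F$; these are exactly the processes whose in-flight messages the reliable-links condition obliges me to deliver, and they are also the only processes I will schedule.

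For the scheduling I would maintain, in a FIFO queue, those in-flight messages whose sender and receiver both lie in $\procs \setminus c_n.F$, initialised with the eligible messages of $c_n.M$. At each tail step I dequeue the oldest such message $(p,q,m)$ and let $p$ take the step that receives it; this is a legal crash-stop step, since $p \notin c_n.F$, the message is still in flight (nothing but its own delivery removes it from the set-valued $M$), and the finitely many messages it produces via $\send$ are appended to the back of the queue. This is the step I expect to carry the constructive weight: I must argue that this dovetailing delivers not only the messages already in flight at $c_n$ but also every message created during the infinite tail, so that the reliable-links predicate holds at \emph{every} index $i$ of the whole run (including indices inside $\rho$). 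The FIFO discipline gives exactly this, because a message enqueued at some finite position is preceded by only finitely many others and appending new messages behind it never postpones it, so it is dequeued — hence delivered — after finitely many steps; and a message in $c_i.M$ is enqueued no later than step $i$ and delivered at some $k \ge i$, as required.

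The remaining obligation, and the main obstacle, is the failure-detector condition: I need an infinite detector history lying in $D(\mathit{fp})$ for the frozen pattern $\mathit{fp}$ that extends the finite output sequence $\fdo_0, \ldots, \fdo_{n-1}$ already used by $\rho$. This is delicate because the witness pattern supplied by $\rho$'s failure-detector condition need be neither monotone nor resilient, so I cannot simply continue it; I must instead exhibit a history for the frozen $\mathit{fp}$. For the unreliable detectors targeted in this paper this is immediate, since $D(\mathit{fp})$ is then a liveness property and hence contains an extension of every finite word over $\fdran$, in particular of $\fdo_0, \ldots, \fdo_{n-1}$; I assign the chosen suffix as the tail outputs $\fdo_n, \fdo_{n+1}, \ldots$. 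Assembling the pieces, the concatenation of $\rho$ with the scheduled tail is an infinite crash-stop run with reliable links — valid steps, monotone and resilience-respecting failures, detector outputs forming a word of $D(\mathit{fp})$, and fair delivery for every eligible message. Since the tail only adds labels and configurations on top of $\rho$, its state trace extends $\str(\rho)$, realising the infinite re-delivery of all sent messages described in the statement.
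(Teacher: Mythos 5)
The paper itself treats this lemma as a simple observation and provides no explicit proof (it does not appear among the omitted proofs in the appendix), so your construction has to be judged against the argument the statement only sketches: keep scheduling deliveries so that every pending message eventually arrives, without introducing new failures. Your proposal is essentially that argument, carried out carefully and correctly on the key obligations: freezing the failure set at $c_n.F$ (which preserves both monotonicity and the resilience bound $\card{c_n.F} \le N_f$), FIFO scheduling to obtain fair delivery for all eligible messages, including those generated during the tail, and — the one genuinely delicate point — recognizing that the detector condition cannot be met by continuing $\rho$'s witness failure pattern, but must be re-established for the frozen pattern, which is exactly where unreliability of the detector ($D(\mathit{fp})$ being a liveness property) is invoked. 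On this last point you are in fact more careful than the paper: the lemma as stated never mentions the unreliability hypothesis, yet some such hypothesis is needed, and it is present wherever the lemma is used (e.g.\ in Theorem~\ref{theorem: preservation detour irrepairable}).

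One concrete incompleteness: your scheduler is undefined when the FIFO queue is empty, and the queue can indeed be empty — $c_n.M$ may contain no message whose sender and receiver both lie outside $c_n.F$, and $\send$ may return the empty partial function, so deliveries need not replenish the queue. In that case your construction prescribes no further steps and therefore fails to produce an \emph{infinite} run. The patch is immediate: whenever the queue is empty, schedule any process $p \notin c_n.F$ (one exists, since $\card{c_n.F} \le N_f < N$) to take a step with $\rmsg = \bot$; such a step is always legal in the crash-stop model, and any messages it produces are enqueued subject to the same eligibility filter you already use. Relatedly, when you say the messages produced by a delivery step are "appended to the back of the queue," the filter must be applied there too: messages addressed to processes in $c_n.F$ must \emph{not} be enqueued, since delivering a message to a failed process is not a legal step (and reliable links does not require their delivery). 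With idle steps interleaved and the filter made explicit, your FIFO fairness argument goes through unchanged and the extension is a genuine infinite run with reliable links.
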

\vspace{-20pt}\paragraph*{Summary of time and failure assumptions}
\vspace{-5pt}\textbf{Time}.
Processes are asynchronous and have no notion of time. Links are asynchronous.

\noindent\textbf{Failures}.
Processes can fail by halting forever, while links interconnecting them do not fail.



\subsection{The lossy synchronous crash-recovery model}
\label{sec:non-deterministic-cr}

We next define our first crash-recovery model. Formally, this a lossy
synchronous crash-recovery model, with non-deterministic, but not
probabilistic losses, crashes, and recoveries. We use it to prove
the preservation of safety properties without taking probabilities
into account, since they are not used in such arguments.
In this model, we will not distinguish between volatile and persistent
memory of a process. Instead, we assume that all memory is persistent.
This can be emulated in practice by persisting all volatile memory
before taking any actions with side-effects (such as sending network
messages). 
Finally, while the model is formally synchronous, in that all processes take
steps simultaneously, it also captures processing delays, as a slow
process behaves like a process that crashes and later recovers.

\noindent\textbf{Algorithms and algorithm steps:}
\label{sec:algor-algor-steps}
A \emph{crash-recovery step} over a message space $\msgs$,
written $\crs(\msgs)$ is defined as: 

\ifthenelse{\boolean{short}}{
 $\nnext: \sspacep \times (\procs \pfun \msgs) \rightarrow \sspacep$
~~ $\send: \sspacep \rightarrow (\procs \rightarrow \msgs).$
}{
\begin{itemize}
  \item $\nnext: \sspacep \times (\procs \pfun \msgs) \rightarrow \sspacep$
  \item $\send: \sspacep \rightarrow (\procs \rightarrow \msgs).$
\end{itemize}
}

\noindent
In other words, a step determines the new state based on the current state and the set of received messages. Given the new state, a process sends a message to every other processes (including itself). Compared to
the asynchronous setting (Section~\ref{sec:crash-stop-model}), in this model:
\begin{enumerate}
\item  A process can receive multiple messages simultaneously (rather than receiving at most one message in a step).
  
\item Every process sends a message to every other process at each
  step. 
 We use this in later sections to send heartbeat messages, if there is nothing else to exchange. We do not require any guarantees on the delivery of the sent messages in this model. 
 

\item No failure detector oracle is specified. The synchrony assumption of this model inherently provides spurious failure detection: each process can suspect all peers it
  did not hear from in the last message exchange. This is in fact
  exactly what we will use to provide failure detector outputs to the
  ``wrapped'' crash-stop algorithms run in this setting\footnote{Similar to Gafni's round-by-round fault
  detectors~\cite{gafni_round-by-round_1998}; in our case, the detectors
  are ``step-by-step''}.
\end{enumerate}

\noindent A \emph{crash-recovery algorithm} $\cralg$ over $\sspace$, $\msgs$ is
a pair $(I, \step)$ where:
\ifthenelse{\boolean{short}}{
$I \subseteq \sspace$ is a finite set of initial states, and
$\step \in \crs(\msgs)$ is the step function.
}{
\begin{itemize}
\item $I \subseteq \sspace$ is a finite set of initial states, and
\item $\step \in \crs(\msgs)$ is the step function.
\end{itemize}
}

\noindent\textbf{Configurations:}
\label{sec:configurations-cr}
As in the crash-stop case, we require more than just the global states
to model algorithm executions; we hence introduce configurations. These, however,
differ from those for the crash-stop setting. As
communication is synchronous, we need not store the in-flight
messages; they are either delivered by the end of a step, or they are
gone. Furthermore, as processes take steps synchronously, we
can introduce a global step number.~\footnote{We use global step numbers later in
the probabilistic model, to assign failure probabilities for processes and
links (e.g., a probability $p_{ij}(t)$ of a message from $i$
to $j$ getting through, if sent in the $t$-th step).} 

A \emph{crash-recovery configuration}
is a tuple $(n, s, F)$ where
\ifthenelse{\boolean{short}}{
$n \in \nats$ is the step number,
$s \in \sspace$ is the (global) state,
$F \subseteq \procs$ is the set of failed processes.
}{\begin{itemize}
\item $n \in \nats$ is the step number,
\item $s \in \sspace$ is the (global) state,
\item $F \subseteq \procs$ is the set of failed processes.
\end{itemize}
}
We denote the set of all crash-recovery configurations by $\crcfgs$.
Note that this set is countable. This will allow us to impose a Markov chain 
structure on the system in the later model.

\noindent\textbf{Step labels and transitions:}
\label{sec:transitions-cr}
As in the crash-stop setting, we use labels to capture all sources
of non-determinism in a step. We will use
these labels to assign probabilities to different state transitions in
the probabilistic model of the next section.

A \emph{crash-recovery step label} is a pair $(\rmsgs, \fails)$,
where:
\begin{itemize}
\item $\rmsgs : \procs \rightarrow (\procs \pfun \msgs)$ denotes the message 
	received in the step; $\rmsgs(p)(q)$ is the message received by $p$ on the
  channel from $q$ to $p$. Note that the function is partial. As we assume that
	$q$ always attempts to send a message to $p$, if $\rmsgs(p)(q)$ is undefined 
	($\rmsgs(p)$ is a partial function), then either the message on this channel 
	was lost in the step, or the sender $q$ has failed.

\item $\fails \subseteq \procs$ is the set of processes that are failed
  at the end of the step.
\end{itemize}

The \emph{crash-recovery steps} (or transitions) of $\alg$, written
$\crtr(\alg)$, is the set of all triples $(c, l, c')$,
\ifthenelse{\boolean{short}}{
where
   (i) only processes that are up handle their messages (ii) messages
  from failed senders are not received (iii) failed processes $\fails=c'.F$.
}{
 where,
letting $l = (\rmsgs, \fails)$ and letting $s_p$ and $s_p'$ denote the state of 
the process $p$ in $c$ and $c'$, we have:
\begin{itemize}
    
\item if $p \notin c.F$, then $s_p' = \nnext(s_p, \rmsgs(p))$; that
  is, processes that are up handle their messages. We assume that
  the state change is atomic; this can be implemented, since we
  assume that all process memory is persistent.
    
\item if $p \in c.F$, then $s_p' = s_p$. Failed processes do not
  change their state. 

\item if $\rmsgs(p)(q)$ is defined, then $q \notin c.F$,
  and $\rmsgs(p)(q) = \send(s_q)$. That is, only the sent step
  messages are received (no message corruption), and messages
  from failed senders are not received.
   
\item $\fails = c'.F$.

\end{itemize}
}

\noindent\textbf{Algorithm runs:}
\label{sec:algorithm-runs-cr}
A finite, resp. infinite \emph{crash-recovery run} of $\alg$ is a
finite, resp. infinite alternating sequence
$c_0,\, l_0,\, c_1 \ldots$ of (crash-recovery) configurations and labels,
ending in a configuration if finite, such that:
\ifthenelse{\boolean{short}}{
$c_0.s \in \alg.I$, i.e., the initial state is allowed by the
  algorithm,
and $(c_i, l_i, c_{i+1}) \in \crtr(\alg)$ for all $i$, that
  is, each step is a valid crash-recovery transition of $\alg$.
}{
\begin{itemize}
\item $c_0.s \in \alg.I$, i.e., the initial state is allowed by the
  algorithm
\item $(c_i, l_i, c_{i+1}) \in \crtr(\alg)$ for all $i$, that
  is, each step is a valid crash-recovery transition of $\alg$.
\end{itemize}
}
The \emph{crash-recovery system} of algorithm $\alg$ is the
sequence of all crash-recovery runs of~$\alg$. 

\vspace{-5pt}\paragraph*{Summary of time and failure assumptions}
\textbf{Time}.
Processes are synchronous and operate in a time-triggered fashion. Links are synchronous (all delivered messages respect a timing upper bound on delivery).

\noindent\textbf{Failure}.
Processes can fail and recover infinitely often. In every time-step, a link can be either crashed or correct. A crashed link drops all sent messages (if any).
\subsection{The probabilistic crash-recovery model}
\label{sec:probabilistic-cr}
We now extend the lossy synchronous crash-recovery model to a probabilistic 
model, where both the successful delivery of messages and failures follow a 
distribution that can vary with time.
A \emph{probabilistic network} $\pnet$ is a function of type
$\procs \times \procs \times \nats \rightarrow [0, 1]$, such that 
$
  \exists \epsnet > 0.\ \forall p,\, q,\, t.\ \pnet(q, p, t) > \epsnet
$
Intuitively, $\pnet(q, p, t)$ is the delivery probability for a
message sent from $p$ to $q$ at time (step number) $t$. A \emph{probabilistic 
failure pattern} $\pfail$ is a function
$\procs \times \nats \rightarrow [0, 1]$, such that
$
\exists \epsfail > 0.\ \forall p,\, t.\ \epsfail < \pfail(p, t) < 1 - \epsfail. 
$
Intuitively, $\pfail(p, t)$
gives the probability of $p$ being up at time $t$.\footnote{Considering infinite time, the upper and lower bounds on $\pfail(p, t)$ ensure that, with probability~1, there is a time when process $p$ is up.}

Given a crash-recovery algorithm $\cralg$, a probabilistic network
$\pnet$ and failure pattern $\pfail$, a
\emph{probabilistic crash-recovery system}
$\crsys(\cralg, \pnet, \pfail)$ is the Markov chain~\cite{baier_principles_2008} with:
\begin{itemize}
\item The set of states $\crcfgs$, i.e., the crash-recovery configuration set.
	
\item The transition probabilities $P(c)(c')$ defined as
  $P(c) = \mathit{nf_\mathit{trans}(c)} \cdot \mathit{trans(c)}$, where $\mathit{nf_\mathit{trans}(c)}$ is the normalization factor for $\mathit{trans}(c)$ and $\mathit{trans}(c)(c')$ is defined as:
  \begin{align*}
    \hspace*{-30pt}\mathit{trans}(c)(c') = & \sum\limits_{\rmsgs, \fails} \truthof{(c, (\rmsgs, \fails), c') \in \crtr(\cralg)}  
                 \cdot \prod\limits_{p, q} (\truthof{\rmsgs(p)(q) \text{ defined}}
                 \cdot \pnet(q, p, c.n) \\
               &\qquad \hspace*{-30pt}+ \truthof{\rmsgs(p)(q) \text{ undefined}}\cdot\truthof{ q \notin c.F} \cdot (1 - \pnet(q, p, c.n))
                 )
                  \\
               & \hspace*{-9pt}\cdot  
                 \prod\limits_{p} ( (1 - \pfail(p,\, c'.n)) \cdot \truthof{p \in \fails} + \pfail(p,\, c'.n) \cdot \truthof{p \notin \fails}
                 ).
  \end{align*}
	Here, $\truthof{\cdot}$ maps the Boolean values true and false to $1$ and $0$ respectively.
  Intuitively, a transition from $c$ to $c'$ is only possible if it is
  possible in the lossy synchronous crash-recovery model. The
  probability of this transition is calculated by summing over all
  labels that lead from $c$ to $c'$, and giving each such label a
  weight. Note that the only non-determinism in the transitions of
  $\crtr(\cralg)$ comes exactly from the behavior we deem probabilistic:
  messages being dropped by the network, and process failures.

  It is easy to see that $\mathit{nf}_\mathit{trans}(c)$ is well
  defined for all $c$, as for a fixed configuration $c$,
  $\mathit{trans}(c)(c')$ is non-zero for only finitely many
  configurations $c'$. 
	
  %
\item The distribution over the initial states defined by $\mcinit = \norm(\mathit{init})$, where
  \begin{itemize}
  \item $P_f(F) = \prod\limits_{p \notin F} \pfail(p, 0) \cdot \prod\limits_{p \in F} (1 - \pfail(p, 0))$,
  \item $
    \mathit{init}(n, s, F) = P_f(F) \cdot \truthof{n = 0}\cdot\truthof{s \in \cralg.I},
  $
  \end{itemize}
  and $\norm$ normalizes the probabilities. Note that normalization is possible,
  since we assumed that after fixing $N$, each algorithm comes with a finite set of initial
  states.

\end{itemize}
\vspace{-5pt}\paragraph*{Summary of time and failure assumptions}
\vspace{-5pt}\textbf{Time}.
Processes are synchronous and operate in a time-triggered fashion. Links are synchronous (all delivered messages respect a timing upper bound on delivery).

\noindent\textbf{Failure}.
Processes and links can fail and recover infinitely often. At the beginning of any time-step a crashed process/link can recover with positive probability and a correct process/link can fail with positive probability.

\vspace{5pt}\noindent\textbf{Important:} All our results (Section~4,~5, and 6) hold for 
both crash-recovery models (Section~\ref{sec:non-deterministic-cr}
 and~\ref{sec:probabilistic-cr} ). However, probabilities are only needed to 
prove results of Section~\ref{sec:ct-prob-term}.

\section{Wrapper for Crash-Stop Algorithms}
\label{sec:wrappers-crash-stop}

We now define the transformation of a crash-stop
algorithm $\alg$ into a crash-recovery algorithm $\alg^R$. Intuitively,
we do this by (also illustrated in Figure~\ref{fig:wrapper}):
\begin{wrapfigure}{R}{5.2cm}
	\centering
	\vspace{-.2cm}
		\includegraphics[width=5.2cm]{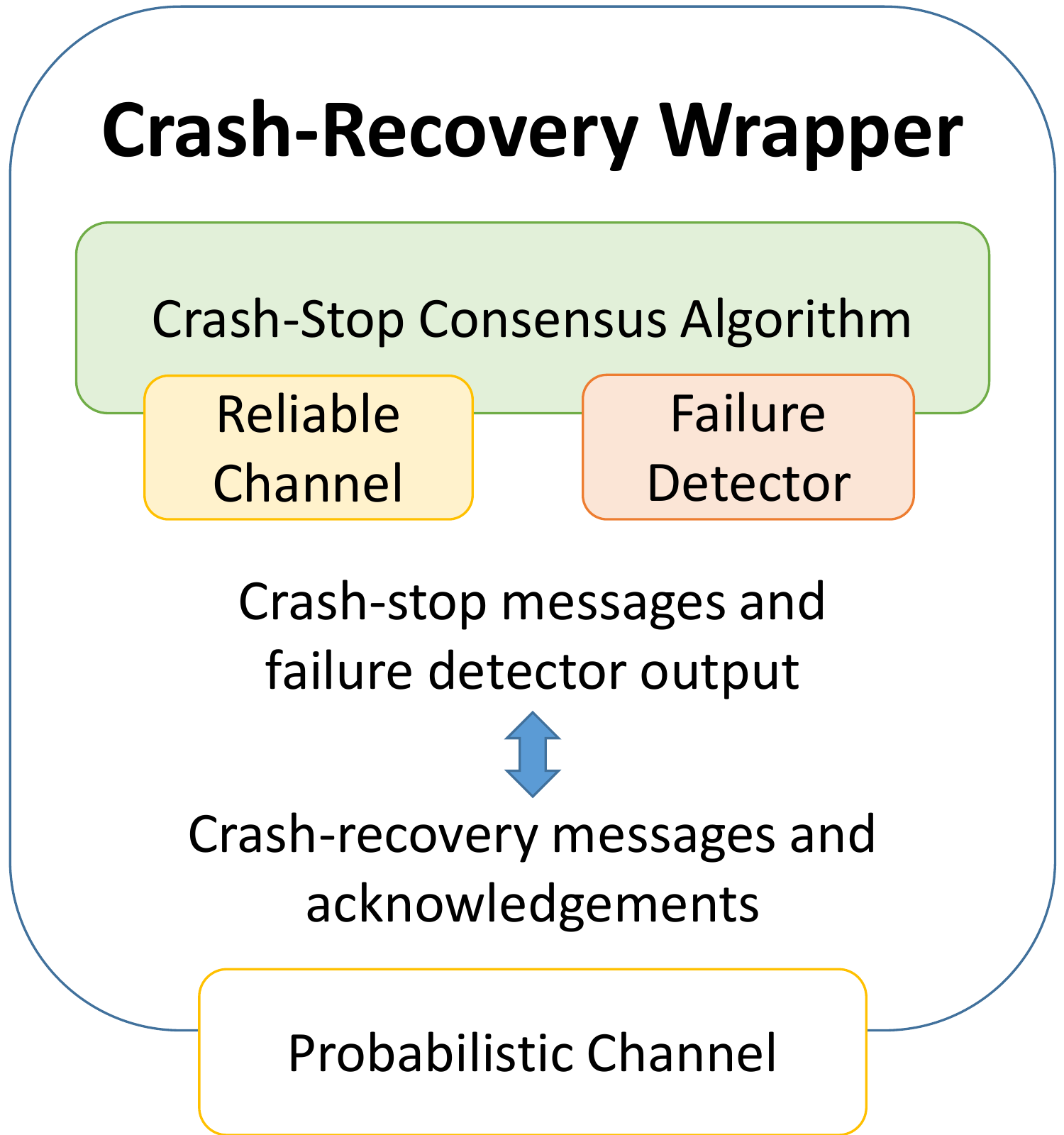}
	\caption{Wrapper concept.}
	\vspace{-.2cm}
	\label{fig:wrapper}
\end{wrapfigure}
\begin{itemize}[noitemsep,topsep=0pt]
\item Generating a synchronous crash-recovery step using a series of
  crash-stop steps. 
  Each step in the series handles one individual received message,
  allowing us to iteratively handle multiple simultaneously incoming messages
  and bridge the synchrony mismatch between the crash-stop and crash-recovery
  models.
  
\item Using round-by-round failure detectors to produce the failure detector
  outputs to be fed to the crash-stop algorithm. These outputs are from the set 
	$2^\procs$.\footnote{We could instead produce outputs that never suspect 
	anyone, since no process crashes forever in our probabilistic model. 
	However for a weaker model
	\ifthenelse{\boolean{short}}{ 
	that we define in the technical report~\cite{TR}
	}{
	considered later
	}
	(where 
	processes are allowed to crash forever), we need failure detectors that 
	suspect processes.}
  
\item Providing reliable links, as required by the crash-stop algorithm. During each crash-recovery step,
  we buffer all outgoing messages of a process,
  and send them repeatedly in the subsequent crash-recovery steps, until an
  acknowledgment is received.
\end{itemize}

\noindent
We first define the message and state spaces of the crash-recovery
version~$\alg^R$ of a given crash-stop algorithm~$\alg$ as follows:

\begin{itemize}
\item In~$\alg^R$, we send a pair of messages to each 
	process in each step: the actual \emph{payload} message (from $\alg$), replaced by a special \emph{heartbeat}
    message $\hbeat$ being sent when no payload needs to be sent; and an \emph{acknowledgment} message, confirming the receipt of
    the last message on the channel in the opposite direction, 

  \item The local state $s_p$ of a process $p$ has three components $(st, \buff,\acks)$:
    (1) $st$ stores the state of $p$ in the  target crash-stop algorithm; (2) $\buff$ represents $p$'s outgoing message buffers, with one buffer for each  process (including one for $p$); and (3) $\acks(q)$ records
    the last message that $p$ received from $q$. The buffers are LIFO, a choice which proves crucial
    for our termination proof (Section~\ref{sec:ct-prob-term}).
\end{itemize}

\noindent Next, given a crash-recovery state $s$, a process $p$, and the messages $\rmsgs$ received by $p$ in the given round, we define $\unfold(p, s, \rmsgs)$,
$p$'s local step unfolding for $s$ and $\rmsgs$.
We define $\unfold(p, s, \rmsgs)$ as the sequence of intermediate steps $p$ takes.  Said differently, $\unfold(p, s, \rmsgs)$ is a sequence of 
crash-recovery states and crash-stop labels
$s_0, l_0, s_1, l_1, s_2, \ldots s_n$, where the intermediate state $s_i$
represents the state of $p$ after processing the message 
of the $i$-th process. In a crash-recovery run, $p$ does not actually transition through the  states $s_1, \ldots, s_n$. These states are listed here separately to intuitively show how the next ``real'' state to which $p$ will transition is computed. The unfolding also allows to relate traces of $\alg$ and $\alg^{R}$ more easily in our proofs, as we produce a crash-stop run from a crash-recovery run when proving properties of the wrapper. The content
of $p$'s buffers changes as we progress through the states $s_i$ of $\unfold(p, s, \rmsgs)$, as the wrapper 
routes the messages to $\alg$ and receives new ones from it.  
The failure detector output (recorded in the labels $l_i$) remains constant through the unfolding: all processes from whom no message was received in the crash-recovery step are suspected.
Finally, the set of failed processes in each label is defined to be empty. We emulate the process recovery that is possible in the crash-recovery model by crash-stop runs in which no processes fail. 

\noindent
Finally, given a crash-stop algorithm $\alg$ with an unreliable failure
detector and with the per-process state space
$\Sigma_p$, we define its \emph{crash-recovery version} $\cralg$
where:
\begin{itemize}
\item the initial states of $\cralg$ constitute the set of crash-recovery configurations $c$ such that there exists a crash-stop configuration $c_s \in \alg.I$ satisfying the following: (i) the initial states of $c$ and $c_s$ correspond to each other and in $c$ all buffers are empty, (ii) no messages are acknowledged,  and (iii) the failed processes
  in $c$ and $c_s$ are the same.

\item the next state of a process $p$ is computed by unfolding, based on the messages $p$ received in this round. 
  
\item the message that $p$ sends to a process $q$ pairs the first element of $p$'s (LIFO) buffer for $q$ with the acknowledgment for the last message that $p$ received from $q$.

\item the execution is short-circuited as soon as a process decides. This is 
achieved by broadcasting a message to all other processes, announcing
that it has decided. When processes receives such a message,
they immediately decide and short-circuits their execution.
\end{itemize}

\noindent
Short-circuiting behavior is a common pattern for consensus
algorithms~\cite{Chandra:UFD,GUERRAOUI:2003}. It can be applied in a black-box 
way and it is sound for any crash-stop consensus algorithm.

\ifthenelse{\boolean{short}}{
A more formal description of the wrapper can be found in the companion  report~\cite{TR}.}
{A more formal description of the wrapper can be found in Appendix~\ref{appendix:wrapper}.}
Given a run of crash-recovery version of an algorithm $\alg$, we define its 
\emph{state trace}, 
$\str(\cdot)$,
as the sequence of configurations obtained by 
removing the labels, projecting each configuration $c$ onto $c.s$, and projecting each local state $c.s_p$ onto $c.s_p.st$.
We overload the function symbol $\str(\cdot)$ to work on both crash-stop and
crash-recovery runs. Note that both crash-stop and crash-recovery state properties are sequences of states from the same state space $\sspace$.

\section{Preservation Results}
\label{sec:preservation-results}

As our first main result, we show that crash-recovery versions of
algorithms produced by our wrapper preserve a wide class of safety
properties.  The class includes the safety properties of consensus:
validity, integrity and agreement
(Section~\ref{sec:system-models}). In other words, if a trace of a
crash-recovery version of an algorithm violates a property, then some
crash-stop trace of the same algorithm also violates that property. We
show this in the non-probabilistic crash-recovery model. However, the
result also translates to the probabilistic model, since all allowed
traces of the probabilistic model are also traces of the
non-probabilistic one.

Preserving all safety properties for all algorithms and failure detectors would be too strong of a requirement, for two reasons. First, as our crash-recovery model assumes nothing about link or process reliability, in finite runs we can give no guarantees about the
accuracy of the simulated failure detectors. Second, the crash-recovery model is synchronous, meaning that different
  processes take steps simultaneously. This is impossible in the
  crash-stop model, which is asynchronous. Thus, the following simple safety
  property $\mathit{OneChanges}$ defined by ``the local state of at most one process changes between
  two successive states in a trace'' holds in the crash-stop model,
  but not in the crash-recovery model (equivalently, we can find a
  crash-recovery trace, but not a crash-stop trace that violates the
  property).

We work around the first problem by assuming that the crash-stop algorithms 
use unreliable failure detectors. For the second problem, we 
restrict the class of safety properties that we wish to
preserve as follows. 
Consider a property $P$. Let $u\notin P$ and $w$ be runs with the same initial states (i.e., $u(1) = w(1)$) such that $u$ is a subword of $w$ (recall we define subwords earlier). $P$ then belongs to the class of properties \emph{not repairable through detours} if $w \notin P$
for all such $u$ and $w$. Intuitively, this means that the sequence of states represented by $u$ inherently violates $P$; so adding ``detours'' by the means of additional intermediate states (forming $w$) does not help satisfy $P$.

The property $\mathit{OneChanges}$ is an example of a safety property
that \emph{is} repairable through detours: we can take \emph{any} word that violates $\mathit{OneChanges}$
and extend it to a word that does not violate $\mathit{OneChanges}$.
However, we can easily show that the following safety properties are not repairable through detours:
\begin{itemize}
\item The safety properties of consensus. E.g., consider the
  validity property: given a word
  $u$ such that $v = u(i, p).\dec$ is a non-initial and a non-$\bot$
  value, adding further states between the initial state and $u(i)$
  does not change the fact that $v$ is neither initial nor $\bot$.
  
\item \emph{State invariant} properties, defined by a set $S$ of ``good'' states,
  such that for a trace $u$, $u \in \inv(S)$ only if
  $\forall i.\ u(i) \in S$. Equivalently, these properties rule out
  traces which reach the ``bad'' states in the complement of $S$.
  Intuitively, if a bad state is reached, we cannot fix it
  by adding more states before or after the bad state.
\end{itemize}

\noindent We establish the following lemma, which is essential to the ensuing preservation theorem.

\begin{lemma}[Crash-recovery traces have crash-stop superwords]
  \label{lemma:cr-have-cs-superwords}
  Let $\alg$ be a crash-stop algorithm with an unreliable failure
  detector. Let $\rho^{R}$ be a finite run of the crash-recovery
	wrapper $\cralg$. Then, there
  exists a finite run $\rho$ of $\alg$ such that $\str(\rho^{R})$ is a
  subword of $\str(\rho)$, $\str(\rho^{R})(1) = \str(\rho)(1)$,
  $\last(\str(\rho)) = \last(\str(\rho^R))$, no processes fail in
  $\rho$, and in-flight messages of
  $\last(\rho)$ match the messages in the buffers of $\last(\rho^R)$. 
\end{lemma}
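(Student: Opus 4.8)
The plan is to induct on the length of the crash-recovery run $\rho^{R}$, simulating each crash-recovery step by the concatenation of the per-process unfoldings that define it, and to carry the full conclusion as the induction hypothesis. The crucial part of the hypothesis is the correspondence between the crash-stop in-flight messages and the crash-recovery buffers (the last conclusion), since this is exactly what guarantees that the messages a process must receive in the next step are available for delivery in the crash-stop run.

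For the base case, a length-one run is a single initial configuration $c_0^{R}$. By the definition of $\cralg$'s initial states there is a global state $s \in \alg.I$ agreeing with $c_0^{R}.s$ on every $st$ field, with empty buffers and no acknowledgements; I take $\rho$ to be the single crash-stop configuration $(s, \emptyset, \emptyset)$. All conclusions then hold trivially: the two one-letter state traces are equal, no process is failed, and both the in-flight set and all buffers are empty.

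For the inductive step, write $\rho^{R} = \rho_k^{R} \cdot l_k^{R} \cdot c_{k+1}^{R}$ with $l_k^{R} = (\rmsgs, \fails)$, and let $\rho_k$ be the crash-stop run obtained for $\rho_k^{R}$. I extend $\rho_k$ by treating each non-failed process $p$ in a fixed order: I append the crash-stop steps read off from $\unfold(p,\, c_k^{R}.s_p,\, \rmsgs(p)) = s_0, l_0, \ldots, s_n$, each label delivering one incoming payload to $p$ (or $\bot$ for a heartbeat), keeping the suspicion set fixed at the round's suspicions and carrying an empty failure set. A process that is down in this step ($p \in c_k^{R}.F$) takes no crash-stop step; this is sound because the crash-stop model is asynchronous, so an unscheduled process is indistinguishable from one that has crashed and recovered, which is precisely how the wrapper emulates recovery. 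By the definition of $\unfold$, the net effect of $p$'s appended steps on its $st$ component is exactly $\nnext$ applied in the crash-recovery step, so once all processes are handled the final crash-stop state agrees with $c_{k+1}^{R}.s$ on all $st$ fields; this simultaneously re-establishes the final-state equality and inserts the new letter of $\str(\rho^{R})$ at a round boundary of $\str(\rho)$, keeping it a subword with matching first letter. Validity of the appended steps reduces to three points: delivered payloads are in-flight (this uses the carried hypothesis — the top-of-buffer messages the senders transmit are, by hypothesis, exactly the crash-stop messages addressed to $p$); no process fails, so the failure pattern is $\emptyset^{\omega}$ and $0 \le N_f$; and the finite sequence of failure detector outputs used in the labels must be a prefix of some word in $\alg.D(\emptyset^{\omega})$. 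This last point is exactly where unreliability is needed: $\alg.D(\emptyset^{\omega})$ is a liveness property, and every finite word is a prefix of some word in a liveness property, so our arbitrary, suspicion-based outputs are admissible.

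The main obstacle is the message bookkeeping needed to re-establish the in-flight/buffer correspondence at $c_{k+1}^{R}$. The subtlety is the acknowledgement latency: a payload is acknowledged only in the round after it is received (the acknowledgement is computed from $\acks$, which is updated only on arrival), so a message can already have been delivered — and thus removed from the crash-stop in-flight set — while still sitting, un-acknowledged, on top of its sender's LIFO buffer. The correspondence I carry must therefore classify a received-but-unacknowledged buffered message as one that has already left the in-flight set, and I must verify, using the LIFO discipline together with the per-channel $\acks$ records, that at each round boundary I can recover exactly which buffered messages are still genuinely in flight. The same retransmission mechanism produces duplicate receptions, which trigger no new step of $\alg$ and hence no crash-stop step; checking that round-tagged payloads stay distinct and that these duplicates are filtered is what preserves the crash-stop ``each message sent at most once per ordered pair'' assumption.
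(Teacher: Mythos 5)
Your proposal is correct and follows essentially the same route as the paper's proof: induction on the run length, extending the crash-stop run by concatenating the per-process unfoldings (with crashed processes taking no steps, justified by the asynchronous indistinguishability of crashed and delayed processes), and carrying as the key invariant the in-flight/buffer correspondence refined by the $\acks$ records — which is exactly the paper's invariant $\last(\rho).M = \set{(q,p,m) \mid m \in \last(\rho^R).s_p.\buff(q) \land \last(\rho^R).s_q.\acks(p) \neq m}$. Your explicit justification of the failure-detector outputs via unreliability (any finite output sequence extends to a word of the liveness property $\alg.D(\emptyset^\omega)$) is a point the paper's proof leaves implicit, and is a welcome addition.
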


\begin{theorem}[Preservation of detour-irreparable safety properties]\label{theorem: preservation detour irrepairable}
  Let $\alg$ be a crash-stop algorithm with an unreliable failure
  detector, and let $P$ be a safety state property that is not
  repairable through detours. If $\alg$ satisfies $P$, then
  so does $\cralg$.
\end{theorem}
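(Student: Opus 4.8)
The plan is to argue by contraposition, reducing a violation by $\cralg$ to a violation by $\alg$. Suppose, for contradiction, that $\cralg$ does not satisfy $P$. Then there is an infinite crash-recovery run $\rho^R$ of $\cralg$ with $\str(\rho^R) \notin P$. Since $P$ is a safety property, there is a finite prefix $u$ of $\str(\rho^R)$ that is ``bad and unrecoverable'', i.e.\ $u \cdot v \notin P$ for every infinite word $v$. This finite prefix of the state trace is realized by a finite prefix $\rho^R_{\mathit{fin}}$ of the run $\rho^R$ (its first $\len{u}$ configurations together with the intervening labels), which is itself a finite run of $\cralg$ and satisfies $\str(\rho^R_{\mathit{fin}}) = u$.

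Next I would transport this bad prefix into the crash-stop world. Applying Lemma~\ref{lemma:cr-have-cs-superwords} to the finite run $\rho^R_{\mathit{fin}}$ yields a finite crash-stop run $\rho$ of $\alg$ (in which no process fails) such that $u = \str(\rho^R_{\mathit{fin}})$ is a subword of $\str(\rho)$ and $u(1) = \str(\rho)(1)$. I then invoke Lemma~\ref{lemma:reliable-irrelevant-prefix} to extend $\rho$ to an infinite crash-stop run $\hat\rho$ of $\alg$ that has reliable links; by construction $\str(\rho)$ is a prefix of $\str(\hat\rho)$, and in particular $\str(\hat\rho)(1) = \str(\rho)(1) = u(1)$.

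Now I would build an infinite witness that is simultaneously outside $P$ and embeds into $\str(\hat\rho)$. Writing $m = \len{\str(\rho)}$ and letting $v$ be the infinite suffix of $\str(\hat\rho)$ starting at position $m+1$, consider the infinite word $u \cdot v$. It is a subword of $\str(\hat\rho)$: the subword embedding of $u$ into the length-$m$ prefix $\str(\rho)$ (which exists by the previous step) extends monotonically by mapping the letters of $v$ identically onto positions $m+1, m+2, \ldots$ of $\str(\hat\rho)$. Moreover $(u \cdot v)(1) = u(1) = \str(\hat\rho)(1)$, so the two words share an initial state, and since $u$ is an unrecoverable bad prefix we have $u \cdot v \notin P$. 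Because $P$ is not repairable through detours, from $u \cdot v \notin P$, the fact that $u \cdot v$ is a subword of $\str(\hat\rho)$, and the matching initial states, we conclude $\str(\hat\rho) \notin P$. But $\hat\rho$ is an infinite crash-stop run of $\alg$ with reliable links, contradicting the hypothesis that $\alg$ satisfies $P$ (in the crash-stop system with reliable links; the argument is unaffected if $\alg$ is instead assumed to satisfy $P$ over all runs, since reliable-link runs form a subset of all runs). Hence no such $\rho^R$ exists, and $\cralg$ satisfies $P$.

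The genuinely load-bearing step is Lemma~\ref{lemma:cr-have-cs-superwords}, which reconciles the synchronous, failure-prone crash-recovery execution with an asynchronous, failure-free crash-stop execution; here I would use it as a black box. Within the theorem proper, the only real subtlety is the gluing in the third paragraph: I must produce a continuation $v$ of the bad prefix $u$ that keeps the combined word both a subword of a bona fide infinite crash-stop run and outside $P$. This is exactly where the two hypotheses interlock: safety guarantees that \emph{any} infinite extension of $u$ stays outside $P$, while detour-irreparability lets me ``inflate'' the subword $u \cdot v$ back up to the full crash-stop trace $\str(\hat\rho)$ without re-entering $P$. I would also be careful to note that the unreliable-failure-detector assumption on $\alg$ is precisely what makes the finite crash-stop run $\rho$ of Lemma~\ref{lemma:cr-have-cs-superwords} available in the first place.
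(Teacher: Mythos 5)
Your proposal is correct and follows essentially the same route as the paper's proof: a contrapositive argument that extracts a finite unrecoverable bad prefix via the safety definition, lifts it to a failure-free crash-stop run with Lemma~\ref{lemma:cr-have-cs-superwords}, extends that run to an infinite one via Lemma~\ref{lemma:reliable-irrelevant-prefix}, and then uses detour-irreparability to push non-membership in $P$ from the glued word up to the full crash-stop trace. Your version is merely more explicit about the subword embedding and the matching-initial-state condition, which the paper leaves implicit.
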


\begin{proof}
  We prove the theorem's statement by proving its contrapositive. Assume
  $\cralg$ violates $P$. By the definition of safety properties~\cite{alpern_defining_1985}, there
  exists a finite run $\rho^R$ of $\cralg$ such that no continuation
  of $\str(\rho^R)$ is in $P$. By
  Lemma~\ref{lemma:cr-have-cs-superwords}, there exists a run $\rho$
  of $\alg$ such that $\str(\rho^R)$ is a subword of $\str(\rho)$.
  By Lemma~\ref{lemma:reliable-irrelevant-prefix}, $\rho$ can be
  extended to an infinite run $\rho \cdot w$ of $\alg$.
  By the choice of $\rho^R$, we have that $\str(\rho^R) \cdot \str(w) \notin P$.
  As $\str(\rho^R) \cdot \str(w)$ is a
  subword of $\str(\rho \cdot w)$, and since 
  $P$ is not detour repairable, then also $\str(\rho \cdot w) \notin P$.
  Thus, $\alg$ also violates $P$.
\end{proof}

Proving that the safety properties of consensus are
detour-irrepairable, means that these properties are preserved by
our wrapper. Since state invariants are also detour-irreparable, they
too are preserved by our wrapper. This makes our wrapper potentially
useful for reusing other kinds of crash-stop algorithms in a
crash-recovery setting, not just the consensus ones.

\begin{corollary}\label{cor:safety_properties}
  If $\alg$ satisfies the safety properties of consensus, then
  so does $\cralg$.
\end{corollary}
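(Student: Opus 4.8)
The plan is to invoke Theorem~\ref{theorem: preservation detour irrepairable} once for each of the three consensus safety properties. The theorem's hypotheses require, for a property $P$, that (i) $\alg$ uses an unreliable failure detector, (ii) $P$ is a safety state property, and (iii) $P$ is not repairable through detours. Hypothesis (i) holds because the wrapper $\cralg$ is by construction defined only for algorithms with unreliable detectors. Validity, integrity, and agreement are the standard safety properties recalled in Section~\ref{sec:system-models}, so (ii) holds for each. The only real content is verifying (iii) for all three; once this is done the conclusion follows, since $\alg$ satisfies all three properties, the theorem gives that $\cralg$ satisfies each, and hence their conjunction.

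To establish detour-irreparability uniformly I would exploit that a violation of any of these properties by a run $u$ is witnessed by finitely many positions in $u$ whose relevant record fields ($.\dec$ or $.\inp$) carry specific values, and that such witnesses survive embedding into any superword. Concretely, let $w$ be a superword of $u$ with $w(1)=u(1)$, so there is a strictly monotone $f:\nats\to\nats$ with $u(i)=w(f(i))$ for the relevant indices $i$. For validity (already sketched in the text preceding the lemma), a position $(i,p)$ with $u(i,p).\dec=v\neq\bot$ and no $q$ satisfying $u(1,q).\inp=v$ maps to $(f(i),p)$ in $w$ with the same decided value $v$, while $w(1,\cdot).\inp=u(1,\cdot).\inp$ because $w(1)=u(1)$; hence $w$ violates validity too. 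For agreement, a pair of witnesses $(i,p),(j,q)$ carrying distinct non-$\bot$ decisions in $u$ maps to $(f(i),p),(f(j),q)$ in $w$ carrying the same two distinct values, so $w$ violates agreement.

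The most delicate case is integrity, because its violation involves an \emph{order} constraint: there exist $p$ and $i<i'$ with $u(i,p).\dec=v\neq\bot$ but $u(i',p).\dec\neq v$. Here I would use the strict monotonicity of $f$: the images satisfy $f(i)<f(i')$, so in $w$ we have $w(f(i),p).\dec=v\neq\bot$ and $w(f(i'),p).\dec\neq v$ with $f(i)<f(i')$, which is exactly a violation of integrity. Thus strict monotonicity of the subword embedding is precisely what guarantees that interposing intermediate states cannot restore integrity, which I expect to be the only genuine obstacle; the remaining cases reduce to the observation that the embedding fixes the initial state and preserves the $.\dec$ and $.\inp$ values at the witnessing positions. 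Having verified (iii) for all three, three applications of Theorem~\ref{theorem: preservation detour irrepairable} show that $\cralg$ satisfies validity, integrity, and agreement, and therefore their intersection, as claimed.
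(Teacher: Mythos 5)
Your proposal is correct and takes essentially the same route as the paper's proof: show that validity, integrity, and agreement are each detour-irreparable safety properties, then apply Theorem~\ref{theorem: preservation detour irrepairable} to each and combine. Your explicit use of the strictly monotone subword embedding $f$ (in particular to handle the order constraint in integrity, and the fact that $w(1)=u(1)$ fixes the inputs for validity) is just a more formal rendering of the paper's ``adding intermediate states preserves the violation witnesses'' argument.
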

\begin{corollary}
  \label{lemma:cs-cr-invariants}
  If $\alg$ satisfies a state invariant, then so does $\cralg$.
\end{corollary}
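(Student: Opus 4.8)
The plan is to recognize that this corollary is an immediate instance of Theorem~\ref{theorem: preservation detour irrepairable}, so the whole task reduces to verifying that a state invariant $\inv(S)$ satisfies the theorem's two hypotheses: that it is a safety property, and that it is not repairable through detours. Once both are in hand, I would apply the theorem directly with $P = \inv(S)$: since $\alg$ satisfies $P$ by assumption, and $P$ is a detour-irreparable safety property, the theorem transfers satisfaction of $P$ to $\cralg$, which is exactly the claim.

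First I would confirm that $\inv(S)$ is a safety property. Given an infinite word $w \notin \inv(S)$, the definition of $\inv(S)$ gives some index $i$ with $w(i) \notin S$. Taking the finite prefix $u = w(1) \cdots w(i)$, any continuation $u \cdot v$ still contains the bad letter $w(i) \notin S$ at position $i$, hence $u \cdot v \notin \inv(S)$ for every infinite $v$. This matches the definition of a safety property, with $u$ playing the role of the unrecoverable ``bad'' prefix.

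Next I would verify detour-irreparability. Suppose $u \notin \inv(S)$ and $w$ is a superword of $u$ with $u(1) = w(1)$. Since $u \notin \inv(S)$, fix an index $i$ with $u(i) \notin S$. Because $u$ is a subword of $w$, there is a strictly monotone $f : \nats \rightarrow \nats$ with $u(j) = w(f(j))$; in particular $w(f(i)) = u(i) \notin S$, so $w \notin \inv(S)$. As this holds for every such pair $(u, w)$, the property $\inv(S)$ is not repairable through detours.

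I do not anticipate a genuine obstacle: the corollary is structural, and both hypotheses drop out of unwinding the definitions of $\inv(S)$, safety, and subword/superword. The only step needing slight care is detour-irreparability, where one must invoke the subword witness function $f$ to locate the offending state $u(i)$ inside $w$; this works precisely because a superword inherits every letter of $u$ in order, so a violating state in $u$ necessarily reappears in $w$. With these two facts established, the conclusion is immediate from Theorem~\ref{theorem: preservation detour irrepairable}.
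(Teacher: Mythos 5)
Your proposal is correct and takes essentially the same route as the paper: the paper also obtains this corollary as an immediate instance of Theorem~\ref{theorem: preservation detour irrepairable}, justified by its remark that state invariants are detour-irreparable because a reached bad state cannot be repaired by inserting additional states. Your write-up simply makes explicit the two hypothesis checks (safety via the bad prefix, detour-irreparability via the monotone subword witness $f$) that the paper leaves informal.
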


\section{Probabilistic Termination}
\label{sec:ct-prob-term}

Termination of consensus algorithms depends on stable periods,
during which communication is reliable and no crashes or recoveries occur.
In this section, we first state a general result about so-called
\emph{selective stable periods} for our 
probabilistic crash-recovery model. We then define a generic class of
crash-stop consensus algorithms, which we call \textit{bounded
  algorithms}. We prove that termination for these
algorithms is guaranteed in our probabilistic model when run under the
wrapper. Namely, we prove that, with probability 1, all processes
eventually decide. We also show that the class of bounded algorithms
covers a wide spectrum of existing algorithms including the celebrated
Chandra-Toueg~\cite{Chandra:UFD} and the instantiation of the indulgent framework of~\cite{GUERRAOUI:2003}
that uses failure detectors.

\subsection{Selective Stable Periods}\label{sec: selective stable period}

Similar to~\cite{Dolev:1997}, our proofs will rely on forming
strongly-connected communication components between particular sets of
processes. However, we will require their existence only for bounded
periods of time, which we call selective stable periods.

\begin{definition}[Selective stable period]
  Fix a crash-recovery algorithm $\cralg$. A \emph{selective-stable
    period} of $\cralg$ of length $\Delta$ for a crash-recovery configuration
  $c$ and a set of processes $C$, written $\sper(\cralg, \Delta, c, C)$, is the
  set of all sequences
  $c = c_0, c_1, \ldots c_{\Delta +1}$ of crash-recovery configurations such
  that $\forall i.\ 1 \leq i \leq \Delta+1$ we have $c_i.F = \Pi \setminus C$ and there
  exist $\rmsgs_i$ such that $(c_i, (\rmsgs_i, \emptyset), c_{i+1})$ is a step of $\cralg$
  and $\rmsgs_i(p)(q)$ is defined $\forall p,q \in C$.
\end{definition}


Such selective stable periods must occur in runs of a crash-recovery algorithm
$\cralg$.

\begin{lemma}[Selective stable periods are mandatory]
  \label{lemma:selective-stable-mandatory}
  Fix a crash-recovery algorithm $\cralg$, a positive
  integer $\Delta$ and a selection function $\mathit{sel} : \crcfgs \rightarrow 2^\Pi$,
  mapping crash-recovery configurations to process sets. Then, the set of
  crash-recovery runs
  \[
    \{ c_0, l_0, c_1, \ldots \mid \forall i \geq 0.\
      c_i, l_i,$ $\ldots,  c_{i + \Delta + 1}  \notin \sper(\cralg, \Delta, c_i, \mathit{sel}(c_i)) 
    \}, \text{has a probability of $0$.}
  \]
  
\end{lemma}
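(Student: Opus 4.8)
The plan is to prove the statement directly, i.e.\ to show that with probability $1$ \emph{some} selective stable period occurs. I would carve an infinite run into consecutive windows of $\Delta+1$ steps: writing $T_k = k(\Delta+1)$ for the window-start times and $W_k$ for the event that the block $c_{T_k}, \ldots, c_{T_k+\Delta+1}$ lies in $\sper(\cralg, \Delta, c_{T_k}, \mathit{sel}(c_{T_k}))$, the event appearing in the lemma --- ``for all $i \geq 0$ the window starting at $c_i$ is not stable'' --- is contained in $\bigcap_{k \geq 0} W_k^c$, since in particular each $W_{T_k}$ must fail. Hence it suffices to show $P\bigl(\bigcap_{k} W_k^c\bigr) = 0$ in the Markov chain $\crsys(\cralg, \pnet, \pfail)$.

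First I would establish a uniform lower bound: a constant $\gamma > 0$ depending only on $N$, $\Delta$, $\epsnet$ and $\epsfail$ such that, from every configuration $c$, the probability that the next window realizes the stable-period conditions for $C = \mathit{sel}(c)$ is at least $\gamma$. The point is that every relevant elementary factor in $\mathit{trans}$ is bounded away from $0$ uniformly in time and configuration. In one step, the failure factor of a process is $\pfail(p, \cdot) > \epsfail$ if we want it up and $1 - \pfail(p, \cdot) > \epsfail$ if we want it down, so pinning the failed set to exactly $\Pi \setminus C$ costs more than $\epsfail^{N}$; delivering a message between live processes costs $\pnet(q,p,\cdot) > \epsnet$, while leaving a dead-sender channel empty is forced and contributes no constraining factor, so delivering all the inter-$C$ messages costs more than $\epsnet^{N^2}$. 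Multiplying over the $\Delta+1$ steps yields an unnormalized weight exceeding $(\epsfail^{N}\epsnet^{N^2})^{\Delta+1}$ for one witnessing trajectory; since the stable-period set is a union of such trajectories and the per-step normalization factor $\mathit{nf}_\mathit{trans}(c)$ is at least $1$ (the unnormalized weights $\mathit{trans}(c)(\cdot)$ sum to at most one), the conditional probability of $W_k$ given the configuration at $T_k$ is at least $\gamma := (\epsfail^{N}\epsnet^{N^2})^{\Delta+1} > 0$, uniformly over $c$ and over the arbitrary value $\mathit{sel}(c)$.

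I would then finish with a tower-property argument. The windows use step-disjoint fresh randomness: $W_k$ is $\mathcal{F}_{T_{k+1}}$-measurable, where $\mathcal{F}_{t}$ is generated by the run up to time $t$, and by the Markov property $P(W_k \mid \mathcal{F}_{T_k})$ is a function of $c_{T_k}$ alone, hence $P(W_k^c \mid \mathcal{F}_{T_k}) \leq 1 - \gamma$. Conditioning the product $\bigcap_{k=0}^{n} W_k^c$ on $\mathcal{F}_{T_n}$ and iterating (the factor indexed $<n$ being $\mathcal{F}_{T_n}$-measurable) gives $P\bigl(\bigcap_{k=0}^{n} W_k^c\bigr) \leq (1-\gamma)^{n+1}$, which tends to $0$ as $n \to \infty$; thus the lemma's event has probability $0$. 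Measurability of all these events is routine because $\crcfgs$ is countable, so everything lives in the standard cylinder $\sigma$-algebra of the chain.

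The hard part will be the uniform lower bound of the second paragraph, for two reasons: it must hold simultaneously for every configuration and for the arbitrary, possibly adversarial choice $\mathit{sel}(c)$ of stable set, and it must survive the normalization factor and the coupling between the message-delivery and failure indicators inside $\mathit{trans}$. The model features that make it go through are exactly the two-sided bound $\epsfail < \pfail(p,t) < 1-\epsfail$, which lower-bounds both the ``up'' and ``down'' events and so lets us drive the failed set to any target, and the one-sided bound $\pnet(q,p,t) > \epsnet$, which lower-bounds delivery --- all we need, since a stable period only demands that inter-$C$ messages be delivered, never dropped.
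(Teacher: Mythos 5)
Your proposal is correct and follows essentially the same route as the paper's proof: a uniform positive lower bound, derived from the two-sided bound on $\pfail$ and the one-sided bound on $\pnet$ (with the normalization factor absorbed since it is at least $1$), on the probability that a selective stable period for the selected set occurs in the next window of $\Delta+1$ steps from any configuration, followed by geometric decay over successive windows and a limit argument. The only differences are presentational: the paper splits your window bound into two auxiliary lemmas (recovering the failure set $\Pi\setminus C$, then sustaining it with inter-$C$ deliveries, yielding $\epsstable$) and asserts the bound $(1-\epsstable)^{\floor{k/\Delta}}$ directly, whereas you spell out the Markov/tower-property conditioning that justifies it — and your per-process failure constant $\epsfail^{N}$ is in fact the correct one, where the paper's $\epsfail^{\card{C}}(1-\epsfail)^{N-\card{C}}$ contains a small slip.
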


The next section shows how our wrapper exploits such periods to construct 
crash-recovery algorithms from existing crash-stop ones in a blackbox 
manner. For future work it might also be interesting to devise  consensus algorithm directly on top of this property.

\subsection{Bounded Algorithms}

We next define the class of \emph{bounded} crash-stop algorithms for which our 
wrapper guarantees termination in the crash-recovery setting. This class 
comprises algorithms which operate in rounds, with an upper bound on the number 
of messages exchanged per round as well the number of rounds correct processes 
can be apart. More formally, they are defined as follows.
\begin{definition}[Bounded algorithms]
  \label{def: bounded alg}
  A crash-stop consensus algorithm (using reliable links and a failure
  detector~\cite{Chandra:UFD}) is said to be \emph{bounded} if it
  satisfies all properties below:

\begin{enumerate}[label=(B\arabic*),leftmargin=*]
\item \label{item:cc-rounds}
  
  \textbf{Communication-closed rounds:} processes operate in
  rounds. The rounds must be
  communication-closed~\cite{elrad_decomposition_1982}: only the
  messages from the current round are considered.
  
\item \label{item:external-change} \textbf{Externally triggered state changes:} After the first
  step of every round, the processes change state only upon receipt of
  round messages, or on a change of the failure detector output.
  
\item \label{item:bounded-messages} \textbf{Bounded round messages:} There exists a bound $B_s$ such
  that, in any round, a process sends at most $B_s$ messages to any
  other process. 

\item \label{item:round-gap} \textbf{Bounded round gap:} Let $N_c = N - N_f$, that is, the
  number of correct processes according to the algorithm's resilience
  criterion. Then, there exists a bound $B_\Delta$, such that the
  fastest  $N_c$ processes are at
  most $B_\Delta$ rounds apart. 

\item \label{item:bounded-term} \textbf{Bounded termination:} There
  exists a bound $B_\mathit{adv}$ such that for any reachable
  configuration $c$ where any $N_c$ fastest processes in $c$ are
  correct, the other processes are faulty, and the failure detector
  output at these processes is perfect after $c$, then all of these
  $N_c$ processes decide before any of them reaches the round
  $r_\mathit{\max}(c) + B_\mathit{adv}$.
\end{enumerate}
\end{definition}
We can check that an algorithm satisfies property (i) B4 by checking under what condition(s) a process increments its round number, and (ii) B5 by observing the algorithm's termination under perfect failure detection given a quorum of correct processes. Section~\ref{sec:exampl-bound-algor} shows an example of how to check that properties B4 and B5 hold. 
Using this definition, we can prove that bounded algorithms terminate in 
selective stable periods of sufficient length.

\begin{theorem}[Bounded selective stable period termination]
  \label{theorem: bounded selective-stable-period}
  Let $\alg$ be a bounded algorithm and $\cralg$ its wrapped crash-recovery
  version.
  Let $c$ be a reachable crash-recovery configuration of the $\cralg$, and
  let $C$ be some set of $N_c$ fastest processes in $c$. Then, there
  exists a bound $B$, such that, for any selective stable period of length
  $B$ for $c$ and $C$, all processes in $C$ decide in $\cralg$. Moreover, the
  bound $B$ is independent of the configuration $c$.
\end{theorem}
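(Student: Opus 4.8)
The plan is to show that a selective stable period of sufficient, but configuration-independent, length lets the wrapped execution faithfully simulate a crash-stop run in which exactly the processes of $C$ are correct, the processes of $\procs \setminus C$ are crashed, and the failure detector is perfect, and then to invoke bounded termination \ref{item:bounded-term} to cap the number of rounds until decision. First I would analyze what a stable period forces on the wrapper. During $\sper(\cralg, B, c, C)$ every step delivers all messages between processes in $C$ and only those; since the wrapper's round-by-round detector suspects exactly the processes from which no message arrived, each process of $C$ outputs precisely $\procs \setminus C$, i.e.\ a \emph{perfect} detection for the scenario in which $C$ is correct and $\procs \setminus C$ is faulty. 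Using the correspondence of Lemma~\ref{lemma:cr-have-cs-superwords} I obtain the crash-stop image $d$ of the starting configuration $c$ (matching states, and in-flight messages matching the buffers), and I extend its prefix by a step crashing $\procs \setminus C$; this is admissible because $\card{\procs \setminus C} = N - N_c = N_f$, so $d$ is a reachable crash-stop configuration satisfying the hypotheses of \ref{item:bounded-term}. The received messages and detector outputs seen by the $C$-processes in the stable period then coincide with those of this crash-stop continuation, so they behave identically.

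Next I would bound the number of rounds this simulated run needs. By \ref{item:round-gap}, the $N_c$ fastest processes, hence the processes of $C$, start within $B_\Delta$ rounds of each other at $d$; by \ref{item:bounded-term}, once detection is perfect all of $C$ decide before reaching round $r_\mathit{\max}(d) + B_\mathit{adv}$. Consequently only the messages of the at most $B_\Delta + B_\mathit{adv}$ rounds from the current frontier onward are \emph{relevant} to reaching a decision, and by \ref{item:bounded-messages} each process emits at most $B_s$ of them per round per channel. Together with \ref{item:cc-rounds} and \ref{item:external-change} this means that no past-round message is needed for further progress: only current- and future-round messages matter, and a correct asynchronous implementation with reliable links must tolerate their early, out-of-order receipt, retaining them until the relevant round is reached.

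The crux, and the step I expect to be the main obstacle, is the round-to-step conversion: turning ``decide within $B_\Delta + B_\mathit{adv}$ rounds'' into ``decide within $B$ crash-recovery steps'' with $B$ \emph{independent of $c$}. Here the LIFO discipline of the buffers is essential. Because relevant messages are always the most recently generated, they sit at the top of each LIFO buffer, above the arbitrarily large heap of stale past-round messages that may have accumulated before the stable period. I would argue that each relevant message is therefore delivered and acknowledged within a bounded number of steps (two per message, with the $C$-to-$C$ channels draining in parallel), so every process of $C$ advances at least one round every $O(B_s)$ steps and the entire $(B_\Delta + B_\mathit{adv})$-round computation is replayed in $O((B_\Delta + B_\mathit{adv})\,B_s)$ steps, a bound that never touches the stale heap and is thus independent of $c$. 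Making this precise requires an invariant stating that, at each step of the stable period, the top of every $C$-to-$C$ buffer is the next relevant message (or a heartbeat); I would maintain it by induction on the $\unfold$ construction together with communication-closedness \ref{item:cc-rounds}, which guarantees that already-consumed rounds never reappear on top.

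Finally I would close using short-circuiting: as soon as one process of $C$ decides, its decision announcement is a freshly generated message, hence placed on top of every buffer and delivered to all of $C$ within one further step, whereupon all of $C$ decide. Collecting the flush/synchronization steps, the $O((B_\Delta + B_\mathit{adv})\,B_s)$ replay steps, and the constant number of announcement steps yields a single bound $B$ expressed only in $B_s$, $B_\Delta$, $B_\mathit{adv}$ and $N$, in particular independent of the chosen configuration $c$, as required.
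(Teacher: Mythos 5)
Your overall strategy --- the stable period yields a perfect simulated detector, Lemma~\ref{lemma:cr-have-cs-superwords} yields a matching crash-stop configuration with $\procs\setminus C$ crashed, \ref{item:round-gap} and \ref{item:bounded-term} cap the number of rounds, and the LIFO buffers make the step count independent of the stale backlog --- is exactly the paper's. But the step you yourself flag as the crux contains a genuine gap: the invariant you propose, ``the top of every $C$-to-$C$ buffer is the next relevant message,'' is false, and the claimed rate ``every process of $C$ advances at least one round every $O(B_s)$ steps'' does not hold. LIFO puts the \emph{newest} message on top. A receiver $p$ that lags behind its peers needs its \emph{current}-round messages, and those sit \emph{below} all the later-round messages its peers have generated since --- up to $B_s \cdot B_\Delta$ of them per channel by \ref{item:bounded-messages} and \ref{item:round-gap}. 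What LIFO buys is only that the messages $p$ needs lie within a \emph{bounded distance} of the top (above the unbounded pre-period heap), not at the top; so a single round advance of the slowest process can cost on the order of $B_s \cdot B_\Delta$ steps, not $B_s$. Moreover, advancement is not parallel: during the stable period no member of $C$ is ever suspected, so a process that is ahead simply cannot advance (by \ref{item:external-change}) until the laggards catch up and send their round messages; progress must cascade upward from the slowest process.

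This is why the paper's proof is structured as an iteration: it maintains a set $A$ of still-advancing processes and a set $\mathit{NA}$ of processes that have decided and stopped (a case your sketch skips, handled via \ref{item:external-change} and \ref{item:bounded-term}), shows that the \emph{slowest} member of $A$ advances or decides within $B_\mathit{slow} = B_s \cdot B_\Delta + 1$ steps, argues via \ref{item:cc-rounds} and \ref{item:external-change} that $A$-processes never need messages from $\mathit{NA}$-processes, and repeats this at most $B_\mathit{iter} = N \cdot (B_\Delta + B_\mathit{adv})$ times, giving $B = B_\mathit{slow} \cdot B_\mathit{iter}$. Your bound $O\bigl((B_\Delta + B_\mathit{adv}) B_s\bigr)$ is short by roughly a factor of $N \cdot B_\Delta$, which is the signature of the missing cascade. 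The theorem itself only needs \emph{some} configuration-independent bound, so your proof can be repaired --- but the repair is precisely the slowest-process iteration, i.e., the paper's argument.
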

\begin{proof}
  We partition the processes from $C$ into the set $A$ (initially $C$)
  and $\mathit{NA}$ (initially~$\emptyset$). The processes in $A$ will advance
  their rounds further, and the processes in $\mathit{NA}$ will not
  advance, but will have already decided. Let $p$ be some slowest
  process from $A$ in $c$. We first claim that $p$ advances or decides
  in any selective stable period for $c$ and $C$ of length
  at most $B_\mathit{slow}$, defined as $B_\mathit{slow} = B_s \cdot B_\Delta + 1$. Denote the period
  configurations by $c = c_0, c_1, \ldots$. Then, using
  Lemma~\ref{lemma:cr-have-cs-superwords}, we obtain a crash-stop
  configuration $c^s_1$ from $c_1$ that satisfies the conditions of
  bounded termination, with the processes from $C$ being correct, and
  the others faulty. We consider two cases.

  First, if $p$ advances in
  the crash-stop model after receiving all the in-flight round
  messages in $c^s_1$ from other processes in $A$,
  then it also advances in the crash-recovery
  model after receiving these messages. Moreover, our wrapper delivers
  all such messages within $B_s \cdot B_\Delta$ steps, as
  \begin{enuminline}
  \item it uses LIFO buffers;
  \item bounds $B_\Delta$ and $B_s$ apply to $c^s_1$
    by~\ref{item:bounded-messages} and~\ref{item:round-gap}; and
  \item by Lemma~\ref{lemma:cr-have-cs-superwords}, the same bounds
  also apply to $c_1$.
  \end{enuminline}

  Second, if $p$ does not advance in the
  crash-stop model, then, since the failure detector output remains
  stable, and since no further round messages will be delivered to $p$
  in the crash-stop model, the requirement~\ref{item:external-change}
  ensures that $p$ will not advance further in the crash-stop setting.
  Moreover, requirements~\ref{item:bounded-term}
  and~\ref{item:external-change} ensure that $p$ must decide after
  receiving all of its round messages; we move $p$ to the set
  $\mathit{NA}$.

  We have thus established that the slowest process from $A$ can move
  to either $\mathit{NA}$ or advance its round after $B_\mathit{slow}$
  steps. Next, we claim that we can repeat this procedure by picking
  the slowest member of $A$ again. This is because the
  procedure ensures
  that the processes in $\mathit{NA}$ always have round numbers lower than the
  processes in $A$. Thus, due to~\ref{item:cc-rounds}
  and~\ref{item:external-change}, the processes in $\mathit{A}$ cannot
  rely on those from $\mathit{NA}$ for changing their state.

  Lastly, we note that this procedure needs to be repeated at most
  $B_\mathit{iter} = N \cdot (B_\Delta + B_\mathit{adv})$ times before all processes move to
  the round $r_\mathit{max}(c) + B_\mathit{adv}$, by which
  point~\ref{item:bounded-term} guarantees that all processes
  terminate. Thus $B_\mathit{slow} \cdot B_\mathit{iter}$ gives us
  the required bound $B$.
\end{proof}

The main result of this paper shows that the wrapper guarantees all consensus
properties for wrapped bounded algorithms, including termination.  

\begin{theorem}[CR consensus preservation]
  \label{corollary: CR termination
    of bounded alg} If a bounded algorithm $\alg$ solves consensus in
  the crash-stop setting, then $\cralg$ also solves consensus in the
  probabilistic crash-recovery setting.
\end{theorem}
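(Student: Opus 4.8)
The plan is to combine the safety-preservation results of Section~\ref{sec:preservation-results} with the termination result of Theorem~\ref{theorem: bounded selective-stable-period} and the almost-sure occurrence of selective stable periods from Lemma~\ref{lemma:selective-stable-mandatory}. Consensus in the probabilistic crash-recovery setting decomposes into the three safety properties (validity, integrity, agreement) and the single liveness property of probabilistic crash-recovery termination. The safety half is essentially free: since $\alg$ solves consensus in crash-stop, it in particular satisfies the safety properties of consensus, so by Corollary~\ref{cor:safety_properties} its wrapped version $\cralg$ satisfies them too. I would note explicitly that this preservation was proved in the non-probabilistic crash-recovery model, but every run of the probabilistic model is also a run of the non-probabilistic one, so the safety properties carry over unchanged.

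The substance of the proof is therefore showing probabilistic termination, i.e.\ that with probability $1$ every process eventually decides. First I would fix the selection function $\mathit{sel}$ that maps each reachable configuration $c$ to some set of $N_c$ fastest processes in $c$, and let $B$ be the uniform bound from Theorem~\ref{theorem: bounded selective-stable-period} (crucially independent of $c$). Applying Lemma~\ref{lemma:selective-stable-mandatory} with this $\Delta = B$ and this $\mathit{sel}$, the set of runs that never contain a selective stable period of length $B$ for any $c_i$ and $\mathit{sel}(c_i)$ has probability $0$. Hence, with probability $1$, a run contains some window $c_i, \ldots, c_{i+B+1} \in \sper(\cralg, B, c_i, \mathit{sel}(c_i))$. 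By Theorem~\ref{theorem: bounded selective-stable-period}, during such a window all $N_c$ processes in $\mathit{sel}(c_i)$ decide. Finally I would invoke the short-circuiting behaviour built into the wrapper: once any process decides, it broadcasts its decision, and in the probabilistic model the upper and lower bounds on $\pfail$ together with $\epsnet > 0$ guarantee that, with probability $1$, every other process is eventually up in a step in which it receives this announcement and decides. Combining these almost-sure events (each holding with probability $1$) gives that all $N$ processes decide with probability $1$.

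The main obstacle I anticipate is the bookkeeping needed to move from ``$N_c$ processes decide'' to ``all processes decide'' rigorously, and to ensure the various probability-$1$ events compose correctly. Two points require care. First, Theorem~\ref{theorem: bounded selective-stable-period} only guarantees decision for the chosen fastest set $C = \mathit{sel}(c_i)$, not for the slower processes; I would lean on the short-circuit broadcast together with integrity (already preserved) to propagate the decision, and argue that propagation itself happens almost surely using the positive per-step delivery probability $\epsnet$ and the fact that each process is up infinitely often with probability $1$ (the footnote on the bounds of $\pfail$). Second, a null set of runs may fail at each stage, but since the relevant countable intersection of probability-$1$ events still has probability $1$, the conclusion holds. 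I would also remark that the appeal to Theorem~\ref{theorem: bounded selective-stable-period} is where all five bounded-algorithm conditions \ref{item:cc-rounds}--\ref{item:bounded-term} are consumed, so no further algorithm-specific reasoning is needed here. The proof then concludes that $\cralg$ satisfies all four consensus properties, and hence solves consensus in the probabilistic crash-recovery setting.
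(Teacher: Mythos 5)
Your proposal is correct, and its skeleton coincides with the paper's proof: safety comes from Corollary~\ref{cor:safety_properties} (together with the observation that every run of the probabilistic model is a run of the non-probabilistic one), and termination comes from Theorem~\ref{theorem: bounded selective-stable-period} combined with Lemma~\ref{lemma:selective-stable-mandatory}. Where you genuinely diverge is the step from ``the $N_c$ fastest processes decide'' to ``all processes decide''. The paper folds this into Lemma~\ref{lemma:selective-stable-mandatory} itself by using a two-case selection function: some $N_c$ fastest processes while no process has decided, and \emph{all} processes once some process has decided; a selective stable period for the full process set then delivers the short-circuit broadcast to everyone within the period, so no probabilistic reasoning beyond the lemma is required. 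You instead fix the selection function to always pick $N_c$ fastest processes and argue propagation directly: the decided process keeps retransmitting its announcement (thanks to the wrapper's buffering), and in each step the joint event ``sender up, receiver up, message delivered'' has probability bounded below by roughly $\epsfail^{2}\epsnet$, so a Borel--Cantelli-style argument gives almost-sure delivery. This is sound, but it re-proves in miniature exactly the estimate underlying Lemma~\ref{lemma:selective-stable-mandatory}, whereas the paper's selection-function trick gets propagation for free from machinery already established. On the other hand, your route is more self-contained on one subtle point: a single application of the lemma guarantees only \emph{one} stable window almost surely, so the paper's argument implicitly needs the lemma applied to suffixes (or read as yielding infinitely many windows) to obtain an all-process window \emph{after} the first decision; your direct propagation argument sidesteps that issue, at the price of having to write out the per-step probability bound rigorously, which your sketch currently leaves implicit.
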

\begin{proof}
  By Corollary~\ref{cor:safety_properties}, we conclude that $\cralg$
  solves the safety properties of consensus in the crash-recovery
  setting. For (probabilistic) termination, the result follows from
  Theorem~\ref{theorem: bounded selective-stable-period} and
  Lemma~\ref{lemma:selective-stable-mandatory}, using as the selection
  function for Lemma~\ref{lemma:selective-stable-mandatory}
  \begin{enuminline}
  \item any function that selects some $N_c$ fastest processes in a
    configuration if no process has decided yet and
  
  \item all processes, if some process has decided.
  \end{enuminline}
  The latter allows us to propagate the decision to all processes,
  due to short-circuiting in  $\cralg$.
\end{proof}


\subsection{Examples of Bounded Algorithms}
\label{sec:exampl-bound-algor}

We next give two prominent examples of bounded algorithms: the
Chandra-Toueg (CT) algorithm~\cite{Chandra:UFD} and the instantiation of the indulgent framework of~\cite{GUERRAOUI:2003}
that uses failure detectors.
For these algorithms, rounds are composite, and consist of the
combination of what the authors refer to as rounds and
phases. Checking that the algorithms then satisfy
conditions~\ref{item:cc-rounds}--\ref{item:bounded-messages} is
straightforward. 
\ref{item:round-gap} holds for
the CT algorithm with $B_\Delta = 4 \cdot N$: take the
fastest process $p$ in a crash-stop configuration; if its CT round number
$r_p$ is $N$ or less, the claim is immediate. Otherwise, $p$ must have
previously moved out of the phase 2 of the last round $r_p'$ in which
it was the coordinator, which implies that at least $N_c$ processes
have also already executed $r_p$. Since CT uses the
rotating coordinator paradigm, $r_p - r_p' \le N$; as each round
consists of $4$ phases, $B_\Delta = 4 \cdot N$. For the algorithm
from~\cite{GUERRAOUI:2003}, processes only advance to the next round
(which consists of two phases) when they receive messages from $N_c$
other processes. Thus, $B_\Delta = 2$. Finally, proving the
requirement~\ref{item:bounded-term} is similar to, but simpler than
the original termination proofs for the algorithms, since it only
requires termination under conditions which includes perfect failure
detector output. For space reasons, we do not provide the full proofs
here, but we note that
$B_\mathit{adv} = \mathit{phases} \cdot \floor{{N}/{2}}$ for both
algorithms, where $\mathit{phases}$ is the number of phases per
algorithm round. Intuitively, within this many rounds the execution
hits a round where a correct processor is the coordinator; since
we assume perfect failure detection for this period, no process
will suspect this coordinator, and thus no process will move out
of this round without deciding.
\vspace{-5pt}\begin{corollary}
The wrapped versions of Chandra-Toueg's algorithm~\cite{Chandra:UFD} and the 
instantiation of the indulgent framework of~\cite{GUERRAOUI:2003} using failure 
detectors solve consensus in the probabilistic crash-recovery model.
\end{corollary}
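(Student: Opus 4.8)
The plan is to obtain the corollary as a direct instance of Theorem~\ref{corollary: CR termination of bounded alg} (CR consensus preservation), applied separately to Chandra-Toueg and to the indulgent instantiation of~\cite{GUERRAOUI:2003}. That theorem has two hypotheses for a given algorithm $\alg$: that $\alg$ solves consensus in the crash-stop setting with reliable links and an unreliable failure detector, and that $\alg$ is \emph{bounded} in the sense of Definition~\ref{def: bounded alg}. So the proof reduces to discharging these two hypotheses for each algorithm and then invoking the theorem. The first hypothesis holds by citation: Chandra-Toueg is a well-established consensus solution in the asynchronous crash-stop model augmented with the $\evstrong$ detector (assuming a correct majority), and the instantiation of~\cite{GUERRAOUI:2003} solves consensus within its indulgent framework with failure detectors.

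The substance of the proof is verifying that both algorithms are bounded, i.e.\ satisfy~\ref{item:cc-rounds}--\ref{item:bounded-term}. First I would note that in both cases rounds are composite (a round together with its phases), and that~\ref{item:cc-rounds}--\ref{item:bounded-messages} are immediate from inspection: messages are tagged with round and phase numbers and stale messages are discarded (communication-closedness), after the first step of a round a process only reacts to round messages or to a change in the failure-detector output (externally triggered state changes), and each process sends at most a constant number of messages to any peer per round (bounded round messages). For~\ref{item:round-gap} I would exhibit the explicit round-gap bounds already indicated in this section: for CT, $B_\Delta = 4 \cdot N$, using that the fastest process cannot be more than $N$ of its rounds ahead of the $N_c$-th fastest under the rotating-coordinator paradigm, multiplied by the $4$ phases per round; for the indulgent instantiation, $B_\Delta = 2$, since a process advances to its next (two-phase) round only after receiving messages from $N_c$ peers.

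The main obstacle — and the only step requiring a genuine, if contained, argument — is~\ref{item:bounded-term} (bounded termination): I must produce a bound $B_\mathit{adv}$ such that, from any reachable configuration in which some $N_c$ fastest processes are correct, the remaining processes are faulty, and the failure-detector output is perfect thereafter, all $N_c$ correct processes decide before any of them reaches round $r_\mathit{max} + B_\mathit{adv}$. This is a restricted form of the algorithms' original liveness proofs, and is simpler because perfect (rather than merely eventually accurate) failure detection is assumed for the stable period, so I can avoid reasoning about spurious suspicions. The key observation for both algorithms is that within $\floor{N/2}$ rounds the execution reaches a round whose coordinator is one of the correct fastest processes; under perfect detection no correct process suspects that coordinator, hence no correct process leaves that round without deciding, giving $B_\mathit{adv} = \mathit{phases} \cdot \floor{N/2}$ in each case. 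With~\ref{item:cc-rounds}--\ref{item:bounded-term} established for both algorithms, I would finally apply Theorem~\ref{corollary: CR termination of bounded alg} to each, concluding that their wrapped versions solve consensus in the probabilistic crash-recovery model.
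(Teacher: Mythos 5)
Your proposal is correct and follows essentially the same route as the paper: the paper likewise discharges \ref{item:cc-rounds}--\ref{item:bounded-messages} by inspection of the composite (round plus phases) structure, establishes \ref{item:round-gap} via the same rotating-coordinator argument giving $B_\Delta = 4 \cdot N$ for Chandra-Toueg and the quorum-advance argument giving $B_\Delta = 2$ for the indulgent instantiation, sketches \ref{item:bounded-term} with the identical bound $B_\mathit{adv} = \mathit{phases} \cdot \floor{N/2}$ and the same correct-coordinator-under-perfect-detection observation, and then concludes by Theorem~\ref{corollary: CR termination of bounded alg}. The only difference is presentational: the paper leaves the full \ref{item:bounded-term} proofs out for space, exactly as you do.
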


\vspace{-10pt}\section{Concluding Remarks}
\label{sec: conclusion}
\vspace{-5pt}This paper introduced new system models that closely capture the messy reality
of distributed systems. Unlike the usual distributed computing models,
we account for failure and recovery patterns of processes and links in
a probabilistic and temporary, rather than a deterministic and perpetual
manner.  Our models allow an unbounded number of processes and
communication links to probabilistically fail and recover, potentially
for an infinite number of times. We showed how and under what
conditions we can reuse existing crash-stop distributed algorithms in
our crash-recovery systems. We presented a wrapper that allows
crash-stop algorithms to be deployed unchanged in our crash-recovery
models. The wrapper preserves the correctness of a wide class of
consensus algorithms.

Our work opens several new directions for future
investigations. First, we currently model failures of processes as
well as communication links individually and independently, with a
non-zero probability of failing/recovering at any point in time. 
\ifthenelse{\boolean{short}}{In the companion technical report~\cite{TR}}{In
Appendix~\ref{sec:nastier-environment}}, we sketch how our results can
be extended to systems where some processes may never even recover
from failure. 
It is interesting to investigate what results can be
established with more complicated probability distributions, e.g., if the model is weakened to allow processes and links to
fail/recover on average with some non-zero
probability~\cite{Fetzer}. Second, our wrapper fully persists the
processes state. Studying how to minimize the amount of persisted
state while still allowing our results (or similar ones) to hold is
another promising direction. Finally, we focus on algorithms
that depend on the reliability of message delivery. Some
algorithms, notably Paxos~\cite{lamport_part-time_1998}, do not.
Finding a modular link abstraction for the crash-stop setting that
identifies these algorithms is another interesting topic. For those algorithms, we speculate that preserving termination in the crash-recovery model
is simpler. 


\ifthenelse{\boolean{short}}{}{

\appendix

\section{Wrapper Description}\label{appendix:wrapper}
  
In this section we provide the formal description of the wrapper that produces the crash-recovery version $\cralg$ of a crash-stop algorithm $\alg$. The message and local state spaces are denoted by $\msgs^R$ and $\Sigma_p^R$ respectively.

The message space of $\cralg$ is defined as $\msgs^{R} = (\msgs \cup \set{\hbeat}) \times (\msgs \cup\set{m_\emptyset})$. $\msgs$ is the message set of $\alg$ which we extend with a special heartbeat message $\hbeat$ being sent when no message needs to be sent. The second part of each message represents an acknowledgment, confirming the receipt of the last message on the channel in the opposite direction. The $m_\emptyset$ message indicates that no messages (including acknowledgment messages)  have been received so far.

The types of the components $(st,\buff,\acks)$ of a record $s\in\Sigma_p^R$ are
\begin{itemize}
\item $st$ of type $\Sigma_p$ storing the state of $p$ in the target crash-stop algorithm.
\item $\buff$ of type $\procs \to \msgs^*$, where we recall that $\msgs^*$ stands for the set of finite sequences of messages from $\msgs$. These are $p$’s outgoing message buffers, with one buffer for each process in the system (including one for $p$).
\item $\acks$ of type $\procs \to (\msgs \cup\set{m_\emptyset})$. This records, for each process $q$, the last message that $p$ received from $q$ (if any). This will be used for acknowledgments.
\end{itemize}

Given a crash-recovery state $s$, a process $p$, and the
partial function $\rmsgs : \procs \rightarrow \msgs^{R}$ of
messages received by $p$ in the given round, we define the
\emph{p's local step unfolding for $s$ and $\rmsgs$}, written
$\unfold(p, s, \rmsgs)$ as follows.
First, let:

\begin{itemize}
\item $m_q$ be $\rmsgs(q)$ if $q \in dom(\rmsgs)$, and let $m_q$ be
  $\bot$ otherwise. That is, $m_q$ is the message from $q$ that $p$
  receives, using $\bot$ to indicate that no message was received.

\item Let $\fdo = \set{q \mid m_q = \bot}$. That is, all processes
  from whom no message was received in this step are suspected.
\end{itemize}

Then, unless $p$ has decided yet and broadcasts the decision, the following sequence $\unfold(p, s, \rmsgs)$ of intermediate steps is taken. I.e., $\unfold(p, s, \rmsgs)$  defines a sequence of 
crash-recovery states and crash-stop labels
$s_0, l_0, s_1, l_1, s_2, \ldots s_n$, where the intermediate state $s_i$
represents the state record of $p$ after processing the message 
of the $i$-th process, defined as follows:
  \begin{enumerate}
  \item $s_0 = s_p$
    
  \item Recalling that we number the processes from $1$ to $N$,
    $l_i$ and $s_{i+1}$ are computed as follows, for $0 \le i < N$:
    \begin{enumerate}
    \item Unpack the message from $i+1$ if one has been received, Let
      $(m,a) = m_{i+1}$ if $m_{i+1} \neq \bot$, and let $m=\bot$ and
      $a=\bot$ otherwise. If $m\neq hb$ and $m\neq\bot$ we check that
      the message has not been acknowledged yet, $s_i.ack(i+1)\neq m$,
      in which case we feed the message to $\alg$'s next function:
      $s_{i+1}.st = \alg.\nnext(s_i.st, (i+1,m), \fdo)$. We also need
      the next state if \emph{no} message (or a duplicate) has been
      received, as $(s_i, l_i s_{i+1})$ needs to be a transition of
      the crash-stop system.  Hence, if $m = \bot$ or $s_i.ack(i+1) = m$, we feed
      $\bot$ in and not $m$, i.e.,
      $s_{i+1}.st = \alg.\nnext(s_i.st, \bot, \fdo)$.

    \item we set $l_i = (p, (i+1, m), \emptyset, \fdo)$ or
      $l_1 = (p, \bot, \emptyset, \fdo)$ accordingly. In both cases, the set of
      failed processes in this label is empty.

    \item We remove the acknowledged message (if any) from the head of the outgoing
      buffer for this process. More precisely, let $b'$ be the buffer
      obtained as follows. 
			First, copy the buffer of all other processes except $i+1$.\\
      $b'(j) = s_i.\buff(j)$ for all $j \neq i + 1$. 
			
			Next, if there are messages to send to process $i+1$, i.e., 
      $s_i.\buff(i+1) \neq []$, and if $a\neq \bot$ and $a$ is equal to $\hd(s_i.\buff(i+1))$, then we let $b'(i+1)$ be $\tl(s_i.\buff(i+1))$; otherwise,
      let $b'(i+1) = s_i.\buff(i+1)$. 
			
			To add a potential new message to the buffer, let
      $\mathit{new} = \alg.\send(s_{i+1}.st)$. These are the new
      messages that $p$ wishes to send, at most one destined to each process in the system. We define
      $s_{i+1}.\buff(j) = b'(j)$, if $\mathit{new}(j)$ is undefined,
      and $s_{i+1}.\buff(j) = \mathit{new}(j)  \cdot b'(j) $ otherwise. Notice that we add the new message
      at the head of the list; as we will also remove messages from the head
      when in the $\send$ function, our buffers are LIFO.

    \item 
			If $m \neq \bot$ and $m\neq \hbeat$, then 
      $s_{i+1}.\acks(i+1) = m$, and $s_{i+j}.\acks(j) = s_i.\acks(j)$
      for $j \neq i + 1$.
    \end{enumerate}
  \end{enumerate}
Finally, given a crash-stop algorithm $\alg$ using an unreliable failure
detector with range $2^\procs$ and with the per-process state space
$\Sigma_p$, we define its \emph{crash-recovery version} $\cralg$
where:
\begin{itemize}
\item $\cralg.I$ is the set of crash-recovery configurations $c$ such
  that there exists a crash-stop configuration $c_s \in \alg.I$ such
  that:
  \begin{enumerate}
  \item for each $p$:
    \begin{enumerate}
    \item $c.s_p.st = c_s.s_p$
    \item $c.s_p.\buff(q) = []$ for all $q$. That is, initially, no messages are
      buffered.
    \item $c.s_p.\acks(q) = m_\emptyset$ for all $q$. Initially, no messages are
      acknowledged.
    \end{enumerate}
  \item $c.F = c_s.F$
  \end{enumerate}

\item $\cralg.\nnext(s_p, \rmsgs) = \last(\unfold(p, s_p, \rmsgs))$.
  
\item $\cralg.\send(s_p)(q)$ is:
  \begin{itemize}
  \item $(\hd(s_p.\buff(q)), s_p.\acks(q))$ if $s_p.\buff(q) \neq []$,
    and
    
  \item 
    $(\hbeat, s_p.\acks(q))$, otherwise, i.e., if we have nothing else to send.
  \end{itemize}
\end{itemize}

\section{Markov chains}
\label{sec:markov-chains}

Our probabilistic crash-recovery model uses Markov
chains~\cite{baier_principles_2008}. We recall the basic notions here
which are relevant for our proofs.

A (discrete-time) \emph{Markov chain} is a tuple
$(S, P, \mcinit)$ where:
\begin{itemize}
\item $S$ is a countable, non-empty set of states
\item $P : S \times S \rightarrow [0, 1]$ is the \emph{transition
    probability function} such that for all states $s$:
  $
    \sum_{s' \in S} P(s, s') = 1.
  $
  
\item $\mcinit : S \rightarrow [0, 1]$ is  the \emph{initial distribution},
  such that $\displaystyle\sum\limits_{s \in S} \mcinit(s) = 1$.
  
\end{itemize}

For a Markov chain, the \emph{cylinder} set spanned by a finite
word $u$ over $S$ is defined as the set:
$
  \mathit{Cyl}(u) = \set{ u \cdot v \mid v \text{ is an infinite word over S}}.
$
These sets serve as the basis events of the $\sigma$-algebras of
Markov chains. If $u = s_0, s_1, \ldots s_k$, then
$
  Pr(\mathit{Cyl}(u)) = \mcinit(s_0) \cdot P(s_0, s_1) \cdot P(s_1, s_2) \cdots P(s_{k-1},
  s_k).
$

\section{Omitted Proofs}\label{app:omitted proofs}

\subsection{Proof of Lemma~\ref{lemma:cr-have-cs-superwords}}
  Informally, we obtain $\rho$ by unfolding the crash-recovery steps
	of $\rho^{R}$. In other words, we extend each crash-recovery step to its corresponding crash-stop steps. Recall that a single step in the crash recovery setting may abstract the receipt of multiple messages, while every received message in the crash-stop setting is step of the algorithm. 
    
 The only difficulty is 
	handling process recovery, which does not exist in the crash-stop system. 
	Here, we exploit the fact that, in asynchronous systems, crashed processes
    are indistinguishable from delayed ones. This enables us to keep
    $\rho$ free from failed processes.

 We now state our claim more formally, and then prove it by induction.  
 We claim that there exists a finite run $\rho$ such that:

  \begin{enumerate}[label=(P\arabic*)]
  \item \label{item:cr-cs-subword} $\str(\rho^{R})$ is a subword of
    $\str(\rho)$, with the same first and last character: 
		$\str(\rho)(1) = \str(\rho^R)(1)$ and
    $\last(\str(\rho)) = \last(\str(\rho^R))$. 

  \item \label{item:cr-cs-failures} No failures occur in $\alg$, $\rho(i).F = \emptyset$ for all 
	$1 \le i \le \len{\rho}$.
   
 \item \label{item:cr-cs-messages} In-flight messages of $\alg$
		correspond to messages in the buffers of $\alg^R$,\\ 
	 $\last(\rho).M = \{(q, p, m) \mid m \in \last(\rho^{R}).s_p.\buff(q) 
	 \land \last(\rho^R).s_q.\acks(p) \neq m\}$.  
	 Here, we overload $\in$, writing $m \in \buff$ even though $\buff$ is a 
	 sequence.
  \end{enumerate}

 \noindent Proving the base case is easy, as $\rho^R(1).s.st$ is an initial state of
  $\alg$, and both the in-flight set of messages $M$ and all the buffers are empty for initial
  configurations (follows from our wrapper's definition).

  We now prove the step case. Let $\rho^R$ and $\rho$ be as above, let:
  \begin{itemize}
  \item $c = \last(\rho)$,
  \item $c^R = \last(\rho^R)$,
  \item $(c^R, l^R, c^{{\prime}R}) \in \crtr(\alg^R)$, and
  \item $\rho^{{\prime}R}$ be the extension of $\rho^R$ with
    $(c^R, l^R, c^{{\prime}R})$.
  \end{itemize}
  In other words, $c'^R$, is a next ``letter'' we could add to a valid run of $\alg^R$.
  Thus, we will prove that we can extend $\rho$ with a series of
  transitions, to obtain a run $\rho'$ such that the claim holds for
  $\rho^{{\prime}R}$ and $\rho'$. The extension is obtained by concatenating
  the unfoldings of each process's local step. More precisely, we
  define the sequences (fragments) $\frag_0 \ldots \frag_N$ where $\frag_i$ is the unfolding of process $i$'s local steps:
  \begin{itemize}
  \item $\frag_0 = [c]$
    
  \item if $i \in c^R.F$, then $\frag_i = [\last(\frag_{i-1})]$
  \item if $i \notin c^R.F$, then $\frag_i$ is the sequence
    $[c_0, l_0, c_1, \ldots c_N]$ such that:
    \begin{itemize}
    \item $c_0 = \last(\frag_{i-1})$

    \item
      $c_0.s_i,\, l_0,\, c_1.s_i,\, l_1, \ldots
      c_N.s_i$ is $\unfold(i,\, c_0.s_i,\, l^R.\rmsgs)$

    \item $c_k.s_j = c_{k-1}.s_j$ for $j \neq i$ and
      $0 < k \le N$ (the state of other processes does not change)

    \item $c_k.M = c_{k-1}.M \setminus \set{l_{k-1}.\rmsg}
      \cup \set{(j, i, m) \mid (j, m) \in \graph{\send(c_{k-1}.s_i)}}$,
      for $0 < k \le N$ (messages delivered to $i$ are removed the in-flight message $M$ and new messages sent by $i$ are added)

    \item $c_k.F = \emptyset$ for $0 < k \le N$
      
    \end{itemize}

  \end{itemize}
  
\noindent  Since the successive fragments overlap at one state, we define
  $$
    \rho' = \rho \cdot \tl(\frag_0) \cdot \tl (\frag_1) \cdot \ldots
    \cdot \tl(\frag_N).
  $$
  and we let $c' = \last(\rho')$. It is easy to check that 
  the definition of $\unfold$ and the correspondence we establish
  between the buffers ensure that $\rho'$ is a (finite) run of $\alg$.
  Finally, we show the claim:
  \begin{itemize}
  \item For all $p$, 
    if $p \notin c^R.F$ then:
    \begin{align*}
     c^{{\prime}R}.s_p &= \cralg.\nnext(c^R.s_p, l^R.\rmsgs(p))  \tag{def. of CR-steps}\\
                       &= \last(\unfold(p, c^R.s_p, l^R.\rmsgs(p)))  \tag{def. of wrapper} \\
                       &= c'.s_p \tag{def. of $c'$}
    \end{align*}
    and if $p \in c^R.F$, then
    \begin{align*}
      c^{{\prime}R}.s_p &= c^R.s_p \tag{def. of CR-steps} \\
                        &= c.s_p \tag{inductive hypothesis} \\
                        &= c'.s_p \tag{def. of $c'$}
    \end{align*}

    Hence, $c^{{\prime}R}.s.st = c'.s$, and given the inductive
    hypothesis, we conclude that $\str(\rho^{{\prime}R})$ is a subword
    of $\str(\rho')$, and \ref{item:cr-cs-subword} holds.

  \item $\rho'(i).F = \emptyset$ for all $1 \le i \le \len{\rho'}$.
    Thus, \ref{item:cr-cs-failures} holds.
  \end{itemize}
  

	\subsection{Proof of Corollary~\ref{cor:safety_properties}}

We first prove that each of the safety properties of consensus is a detour-irrepairable safety property. The rest follows directly from Theorem~\ref{theorem: preservation detour irrepairable}.
\begin{enumerate}
\item Validity: Consider a word $w$ such that
  $w(i, p).\dec = x$, where
  $ x\neq~\bot ~\land~ x\neq w(1, q).inp,~ \forall q$. In other words, $w$ violates validity. 
    Adding states between the initial state, $w(1)$, and $w(i)$, creates a new word where $w'(1,q).inp = w(1, q).inp,~ \forall q$ and $\exists i'\geq i: w'(i', p).\dec = x$. Thus, adding further states does not change the fact that $x$ is neither $\bot$ nor equal to an initial value. Validity is hence detour-irreparable.
    
    
\item Integrity: Consider a word $w$ such that $w(i, p).\dec = x$ and $w(i', p).\dec = x'$, where $x\neq\bot$, $i' > i$  and $x'\neq x$. Adding further states between $w(1),~w(i)$ and $w(i')$ results in word $w'$ where $\exists j'\geq j\geq i: w'(j, p).\dec = x\land w'(j', p).\dec = x'$ and $w'(1,q).inp = w(1, q).inp,~ \forall q$. Thus, adding further states does not change the fact that $x$ is neither $\bot$ nor that $x'\neq x$ and hence integrity is detour-irreparable.


\item Agreement:  Consider a word $w$ such that $w(i, p).\dec = x$ and $w(i', q).\dec = x'$, where $x,x'\neq\bot$ and $x'\neq x$. Adding further states between $w(1),~w(i)$ and $w(i')$ creates a new word $w'$ where $\exists j'\geq j\geq i: w'(j, p).\dec = x\land w'(j', q).\dec = x'$. Notice that even if $\exists j'\geq j''\geq j: w'(j'', p).\dec = \bot$, this does not change the fact that $w'(j, p).\dec = x$. Thus, adding further states does not change the fact that two different processes $p$ and $q$ have decided $x\neq \bot$ and $x' \neq \bot$ $x$ respectively, where $x\neq x'$. This shows that agreement is detour-irreparable.
\end{enumerate}

\subsection{Proof of Lemma~\ref{lemma:selective-stable-mandatory}}

We first prove two auxiliary results.

\begin{lemma}[Selective recovery always possible]
  \label{lemma:full-recovery}
  For any crash-recovery algorithm $\cralg$ and any configuration $c$
  of its probabilistic crash-recovery system, and $C$ any set of processes
  $$
    \left(\sum\limits_{c'} P(c, c') \cdot \truthof{c'.F = \Pi \setminus C}\right) > \epsfail^{|C|} (1-\epsfail)^{N-|C|}.
  $$
\end{lemma}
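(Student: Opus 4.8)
The plan is to exploit the fact that, in the transition weight $\mathit{trans}(c)(c')$, the failed set of the successor $c'$ is governed entirely by the failure factor $\prod_p(\cdots)$, which is completely decoupled from the message factor $\prod_{p,q}(\cdots)$. Consequently, once the target failed set $\fails = \Pi \setminus C$ is fixed, the message-delivery contributions can be summed away to $1$, leaving only a product of per-process up/down probabilities that is bounded below by constants coming from $\epsfail$. The key point is that we must \emph{sum} over delivery patterns rather than isolate a single favorable successor.

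First I would dispose of the normalization factor. For each channel $(p,q)$ the delivery and loss weights sum to at most $1$, while for the failure factor
\[
  \sum_{F'}\ \prod_{p\notin F'}\pfail(p,c'.n)\cdot\prod_{p\in F'}\bigl(1-\pfail(p,c'.n)\bigr)
  \;=\;\prod_p\bigl(\pfail(p,c'.n) + (1-\pfail(p,c'.n))\bigr)\;=\;1 .
\]
Hence $\sum_{c'}\mathit{trans}(c)(c') \le 1$, so $\mathit{nf}_\mathit{trans}(c) \ge 1$ and therefore $P(c,c') \ge \mathit{trans}(c)(c')$ for every $c'$. This lets me work with the unnormalized weights throughout, and only strengthens the desired strict inequality.

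Next I would restrict the sum to successors with the prescribed failed set. Since a crash-recovery transition is determined by its label $(\rmsgs,\fails)$ and $\fails = c'.F$, summing $\mathit{trans}(c)(c')$ over all $c'$ with $c'.F = \Pi\setminus C$ is the same as summing over all labels whose failure component is exactly $\Pi\setminus C$. In every such term the failure factor is the same constant $\prod_{p\in C}\pfail(p,c'.n)\cdot\prod_{p\notin C}(1-\pfail(p,c'.n))$ (note $c'.n = c.n+1$ is fixed across these successors), so it factors out; the residual sum of message factors ranges over all admissible patterns $\rmsgs$ and telescopes to $1$, each channel from a live sender contributing $\pnet(q,p,c.n) + (1-\pnet(q,p,c.n)) = 1$ and each channel from a crashed sender being silent with weight $1$. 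This yields
\[
  \sum_{c'} P(c,c')\cdot\truthof{c'.F = \Pi\setminus C}
  \;\ge\;
  \prod_{p\in C}\pfail(p,c'.n)\cdot\prod_{p\notin C}\bigl(1-\pfail(p,c'.n)\bigr).
\]
The proof then closes by invoking the defining bounds $\epsfail < \pfail(p,t) < 1-\epsfail$: every factor of this product is strictly positive, and bounding the $\card{C}$ ``up'' factors and the $N-\card{C}$ ``down'' factors by the appropriate $\epsfail$-constants gives the claimed strict lower bound $\epsfail^{\card{C}}(1-\epsfail)^{N-\card{C}}$.

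The step deserving the most care, and the main obstacle, is the summing-away of the message factors. One cannot instead exhibit a single favorable successor and bound its weight, because the loss terms $1-\pnet(q,p,c.n)$ admit no useful lower bound (the network could be almost perfectly reliable), so any fixed delivery pattern may carry arbitrarily small weight. The decisive observation is precisely that the successor's failed set is independent of which messages arrive, so one may range over \emph{all} delivery patterns and recover the clean factor $1$. I would also verify carefully that the silent channels from crashed senders contribute weight $1$ rather than spuriously zeroing out the message factor, and that the factored-out failure term is genuinely a single constant, so that the reduction to the per-process product is valid.
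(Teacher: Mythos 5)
Your proposal is correct and takes essentially the same route as the paper's own proof: both dispose of the normalization factor $\mathit{nf}_\mathit{trans}(c)$ by summing the message factors over all delivery patterns (with the failure factors summing to $1$), then isolate the single failure term for $\fails = \Pi \setminus C$ and bound it below by $\epsfail^{|C|}(1-\epsfail)^{N-|C|}$ using the defining bounds on $\pfail$. Your writeup is in fact more explicit than the paper's terse argument, in particular about the strictness of the inequalities and about silent channels from crashed senders contributing weight $1$.
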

\begin{proof}
The normalization factor  $\mathit{nf_\mathit{trans}}(c)$ in the definition of the Markov chains in Section~\ref{sec:probabilistic-cr} is the inverse of:
  \begin{align*}
    \sum\limits_{c'} \sum\limits_\rmsgs & \truthof{\exists \fails, c'.\ (c, (\rmsgs. fails), c') \in \crtr(\cralg)} \cdot \\
    & \prod\limits_{p, q} \Big(\truthof{\rmsgs(p)(q) \text{ defined}}
    \cdot \pnet(q, p, c.n) \\
    & \qquad + \truthof{\rmsgs(p)(q) \text{ undefined}}\cdot\truthof{ q \notin c.F}
    \cdot (1 - \pnet(q, p, c.n)) \Big)
  \end{align*}
  and thus larger than $1$.
Together with the definition of $\mathit{trans}(c)$ and $\epsfail$, we can see that the left hand side is
  larger than $\epsfail^{|C|} (1-\epsfail)^{N-|C|} \cdot \mathit{nf}_\mathit{trans}$. 
\end{proof}
 
\begin{lemma}[Selective stable periods possible]
  \label{lemma:selective-stable-possible}
  Consider a crash-recovery algorithm $\cralg$, a positive
  integer $\Delta$, and a set of processes $C$. Then, there exists an $\epsstable > 0$ such that for
  any configuration $c_0$, there exist subsequent states
  $c_1, c_{2}, \ldots c_{B}$ such that $c_1, \ldots c_{B}$ is a
  selective-stable period
\end{lemma}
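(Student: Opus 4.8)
The statement as written is completed by its intended quantitative content: I would prove that there is a constant $\epsstable>0$, \emph{independent of} $c_0$, such that from any configuration $c_0$ the probability that the next $\Delta+1$ steps of the probabilistic system yield a sequence in $\sper(\cralg, \Delta, c_0, C)$ is at least $\epsstable$ (so in particular such a period exists). The plan is to lower-bound the one-step probabilities uniformly and then chain them with the Markov property. First I would decompose a length-$\Delta$ selective-stable period $c_0, c_1, \ldots, c_{\Delta+1}$ into one \emph{recovery step} $c_0 \to c_1$, whose only requirement is $c_1.F = \procs \setminus C$, followed by $\Delta$ \emph{stable steps} $c_i \to c_{i+1}$ (for $1 \le i \le \Delta$), each of which must both keep $c_{i+1}.F = \procs \setminus C$ and deliver every message exchanged among processes of $C$. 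This asymmetry is the crucial modelling point: in $c_0$ the processes of $C$ may still be crashed, so no delivery among them can be demanded there, whereas from $c_1$ onward every sender in $C$ is already up and can deliver.

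For the recovery step, Lemma~\ref{lemma:full-recovery} directly gives the uniform bound $\sum_{c'} P(c_0, c') \cdot \truthof{c'.F = \procs \setminus C} > \epsfail^{|C|}(1-\epsfail)^{N-|C|}$, which handles $c_0 \to c_1$. For a stable step out of a configuration $c$ with $c.F = \procs \setminus C$, I would rerun the estimate of Lemma~\ref{lemma:full-recovery} but additionally fix, inside the product over channels in the definition of $\mathit{trans}(c)$, the factor $\truthof{\rmsgs(p)(q)\text{ defined}}\cdot\pnet(q, p, c.n)$ for every pair $p,q \in C$ (there are at most $|C|^2$ such channels, each sender in $C$ being up in $c$). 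Since $\pnet(q, p, c.n) > \epsnet$ and the remaining channels as well as the resulting states $c'$ are marginalized away exactly as in Lemma~\ref{lemma:full-recovery}, the probability of the event ``$c'.F = \procs \setminus C$ and all channels among $C$ deliver'' is at least $\delta := \epsfail^{|C|}(1-\epsfail)^{N-|C|}\cdot \epsnet^{|C|^2} > 0$, again uniformly in $c$.

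Finally I would compose these bounds. The recovery step lands in a configuration with failure set $\procs \setminus C$, which is precisely the hypothesis every stable-step bound needs, and each stable step re-establishes that same hypothesis for the next one; hence the events can be concatenated by the Markov property. The probability of the whole period is therefore at least $\epsfail^{|C|}(1-\epsfail)^{N-|C|} \cdot \delta^{\Delta}$, so setting $\epsstable := \epsfail^{|C|}(1-\epsfail)^{N-|C|}\cdot\bigl(\epsfail^{|C|}(1-\epsfail)^{N-|C|}\,\epsnet^{|C|^2}\bigr)^{\Delta}$ gives a positive constant depending only on $\Delta$, $C$, $N$, $\epsfail$ and $\epsnet$, and not on $c_0$, as required.

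The main obstacle I anticipate is not the arithmetic but the bookkeeping against the coupled transition formula of the probabilistic model: I must argue that requiring delivery only on the channels among $C$ (leaving all other channels free) is a genuine marginal probability and not an overcounting across labels, and I must keep the first transition purely a recovery step so that I never require a process crashed in $c_0$ to have sent a message. Once this event-decomposition is pinned down, inheriting the normalization treatment of Lemma~\ref{lemma:full-recovery} and multiplying the uniform per-step bounds is routine.
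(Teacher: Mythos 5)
Your proposal is correct and follows essentially the same route as the paper's own proof: one recovery step bounded via Lemma~\ref{lemma:full-recovery}, followed by $\Delta$ stable steps each bounded by $\epsnet^{|C|^2}\cdot\epsfail^{|C|}(1-\epsfail)^{N-|C|}$ (using that from a configuration with failure set $\procs\setminus C$ every sender in $C$ is up), chained by the Markov property to give exactly the paper's constant $\epsstable = \epsfail^{|C|}(1-\epsfail)^{N-|C|}\cdot\bigl(\epsnet^{|C|^2}\,\epsfail^{|C|}(1-\epsfail)^{N-|C|}\bigr)^{\Delta}$. The only cosmetic difference is that the paper handles your ``marginalization'' worry by observing that, given determinism and full delivery among $C$, the successor configuration at each stable step is unique, whereas you bound the probability of the corresponding event directly; the two are equivalent.
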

\begin{proof}
  From Lemma~\ref{lemma:full-recovery}, we can find a $c_1$ such that
  $P(c_0, c_1) > \epsfail^{|C|} (1-\epsfail)^{N-|C|}$. Note that for $i \ge 1$, given $c_i$,
  $c_{i+1}$ is unique, since the algorithm is deterministic and since
  we require that all messages between processes in $C$ go through ($\rmsgs$ is defined).
  From the definition of $\mathit{trans}$, we can easily
  deduce $P(c_i, c_{i+1}) > \epsnet^{|C|^2} \cdot \epsfail^{|C|} (1-\epsfail)^{N-|C|}$ for
  $i > 1$, since $c_i.F = \Pi-\{C\}$ by construction. Thus,
  $\epsfail^{|C|} (1-\epsfail)^{N-|C|} \cdot (\epsnet^{|C|^2} \cdot \epsfail^{|C|} (1-\epsfail)^{N-|C|})^\Delta$ can be used as
  $\epsstable$.
\end{proof}

Finally, we prove Lemma~\ref{lemma:selective-stable-mandatory}.
\begin{proof}
  Let $\mathit{noss}(k)$ be the set of all finite runs of $\cralg$ of
  length $k$ which have no selective stable periods.
  Define
  $
    \mathit{Cyl}(\mathit{noss}(k)) = 
    \bigcup\limits_{\tau \in \mathit{noss}(k)} \mathit{Cyl}(\tau).
  $
  By
  Lemma~\ref{lemma:selective-stable-possible}, we get:
  $$
    Pr(\mathit{Cyl}(\mathit{noss}(k))) < (1 - \epsstable)^{\floor{\frac{k}{\Delta}}}.
  $$
  For the set $S$ of
  infinite runs with no stable periods, we have:
  $
    S = \bigcap\limits_k \mathit{Cyl}(\mathit{noss}(k))
  $
  and since
  $
    \mathit{Cyl}(\mathit{noss}(k+1))
    \subseteq 
    \mathit{Cyl}(\mathit{noss}(k))
  $
  we have that
  $
    Pr(S) = \lim\limits_{k \rightarrow \infty} Pr(\mathit{Cyl}(\mathit{noss}(k)))
    \le \lim\limits_{k \rightarrow \infty} (1 - \epsstable)^{\floor{\frac{k}{\Delta}}} = 0,
  $
  since $\epsstable > 0$ and $\Delta$ is positive.
\end{proof}

\section{A nastier probabilistic environment}
\label{sec:nastier-environment}

In the probabilistic model described in
Section~\ref{sec:probabilistic-cr}, we required the probabilistic
failure pattern $\pfail$ to assign to all processes a positive
probability of being up at all times. 
This allows us to form arbitrary stable components for bounded
lengths of time. Here, we define a more difficult setting, where
some processes can fail forever. However, the number of the remaining
processes must be large enough to allow termination. We thus
change the condition on the failure pattern from Section~\ref{sec:probabilistic-cr}
as follows. We require that there exists a set $C$ of (intuitively ``correct'') processes such
that:
\begin{itemize}
\item there exists an $\epsfail > 0$ such that
  $\forall p \in C.\, \forall t.\, \pfail(p, t) \in (\epsfail, 1 - \epsfail)$
\item $\card{C} > N - N_f$
  
\item for all processes $p \notin C$, there exist only a finite
  number of values $t$ such that $\pfail(p, t) > 0$.
\end{itemize}

The third restriction allows us to eventually form a bounded stable period
with a strongly connected component of correct processes. Without it, we
could have a process whose probability of being up is oscillating between $0$ and $1$.
Clearly, in a setting where only one process has a probability of $1$, and
all others of $0$ at all times, termination is impossible.

As this definition still does not add any runs to the system that did
not exist in the non-probabilistic environment, all results from
Section~\ref{sec:preservation-results} still hold. We next sketch how our
definition of bounded algorithms and our proofs
need to change to preserve the termination result in this more adversarial
setting. First, note that the specification of termination has to change
in this setting; as processes not from $C$ can crash forever, we cannot
require them to terminate.

Intuitively, consider the point in time in a crash-stop run after which all faulty
processes have already crashed. Faulty processes in this context are
those processes that crash permanently, i.e. not in the set $C$ above.
A corresponding run in the crash-recovery system would be in some
state where the processes from $C$ can recover with
positive probability, with processes being at different rounds. Our system now can
be seen similar to our probabilistic model described in
Section~\ref{sec:probabilistic-cr}, but now started from a
different set of possible initial states. We need to adapt our bounded round gap
property~\ref{item:round-gap} to account for this difference. Namely,
we change it to require the bound to hold for any configuration where
all the faulty processes have crashed, and where we no longer assume that the
fastest processes are correct. The bound is dynamic, in that it can depend
on the given configuration. The bounded termination requirement~\ref{item:bounded-term}
also changes in a similar way; we no longer assume that the fastest processes
are correct, but that all the faulty processes have already crashed. However,
we do not need to make the bound $B_\mathit{adv}$ dynamic.

With these modifications, we can reuse the proof of Theorem~\ref{theorem:
  bounded selective-stable-period} to show that
non-faulty processes of this nastier model decide, and hence
terminate.
}
\end{document}